\newtheorem{theorem}{Theorem}
\newtheorem{lemma}[theorem]{Lemma}
\newtheorem{corollary}[theorem]{Corollary}
\newtheorem{open}{Open Problem}
\newtheorem{example}{Example}
\newtheorem{conj}[theorem]{Conjecture}
\newcommand{\lcm}{{\mathrm{lcm}}}
\newcommand{\tr}{{\mathrm{Tr}}}
\newcommand{\gf}{{\mathrm{GF}}}
\newcommand{\RM}{{\mathrm{RM}}}
\newcommand{\m}{\mathbb{M}}
\newcommand{\cP}{{\mathcal{P}}}
\newcommand{\cB}{{\mathcal{B}}}
\newcommand{\C}{{\mathcal{C}}}
\newcommand{\bc}{{\mathbf{c}}}
\newcommand{\bD}{{\mathbb{D}}}
\begin{document}

\begin{frontmatter}



\title{Infinite families of $2$-designs and $3$-designs from linear codes 
\tnotetext[fn1]{C. Ding's research was supported by the Hong Kong Research Grants Council,
Proj. No. 16300415.}
}

\author[cding]{Cunsheng Ding}
\ead{cding@ust.hk}
\author[lcj]{Chengju Li}
\ead{lichengju1987@163.com}

\address[cding]{Department of Computer Science
                                                  and Engineering, The Hong Kong University of Science and Technology,
                                                  Clear Water Bay, Kowloon, Hong Kong, China}
\address[lcj]{School of Computer Science and Software Engineering, East China Normal University, Shanghai, 200062, China}

\begin{abstract}
The interplay between coding theory and $t$-designs started many years ago. While every
$t$-design yields a linear code over every finite field, the largest $t$ for which an infinite
family of $t$-designs is derived directly from a linear or nonlinear code is $t=3$. Sporadic 
$4$-designs and 
$5$-designs were derived from some linear codes of certain parameters. The major objective 
of this paper is to construct many infinite families of $2$-designs and $3$-designs from linear  
codes. The parameters of some known $t$-designs are also derived. In addition, many conjectured 
infinite families of $2$-designs are also presented.  
\end{abstract}

\begin{keyword}
Difference family \sep cyclic code \sep linear code \sep $t$-design.

\MSC  05B05 \sep 51E10 \sep 94B15 

\end{keyword}

\end{frontmatter}

\section{Introduction}

Let $\cP$ be a set of $v \ge 1$ elements, and let $\cB$ be a set of $k$-subsets of $\cP$, where $k$ is
a positive integer with $1 \leq k \leq v$. Let $t$ be a positive integer with $t \leq k$. The pair
$\bD = (\cP, \cB)$ is called a $t$-$(v, k, \lambda)$ {\em design\index{design}}, or simply {\em $t$-design\index{$t$-design}}, if every $t$-subset of $\cP$ is contained in exactly $\lambda$ elements of
$\cB$. The elements of $\cP$ are called points, and those of $\cB$ are referred to as blocks.
We usually use $b$ to denote the number of blocks in $\cB$.  A $t$-design is called {\em simple\index{simple}} if $\cB$ does not contain repeated blocks. In this paper, we consider only simple 
$t$-designs.  A $t$-design is called {\em symmetric\index{symmetric design}} if $v = b$. It is clear that $t$-designs with $k = t$ or $k = v$ always exist. Such $t$-designs are {\em trivial}. In this paper, we consider only $t$-designs with $v > k > t$.
A $t$-$(v,k,\lambda)$ design is referred to as a {\em Steiner system\index{Steiner system}} if $t \geq 2$ and $\lambda=1$, and is denoted by $S(t,k, v)$.

A necessary condition for the existence of a $t$-$(v, k, \lambda)$ design is that
\begin{eqnarray}\label{eqn-tdesignnecessty}
\binom{k-i}{t-i} \mbox{ divides } \lambda \binom{v-i}{t-i}
\end{eqnarray}
for all integer $i$ with $0 \leq i \leq t$.

There has been an interplay between codes and $t$-designs for decades. The incidence 
matrix of any $t$-design spans a linear code over any finite field $\gf(q)$. A lot of 
progress in this direction has been made and documented in the literature (see, for 
examples, \cite{AK92}, \cite{DingBook}, \cite{Tonchev,Tonchevhb}). On the other hand, 
both linear and nonlinear codes may hold $t$-designs. Some linear and nonlinear codes 
were employed to construct $2$-designs and $3$-designs \cite{AK92,Tonchev,Tonchevhb}. 
Binary and ternary Golay codes of certain parameters hold $4$-designs and $5$-designs \cite{AK92}. However, the largest 
$t$ for which an infinite family of $t$-designs is derived directly from codes is $t=3$. 
It looks that not much progress on the construction of $t$-designs from codes has 
been made so far, while many other constructions of $t$-designs are documented in the 
literature (\cite{BJL,CMhb,KLhb,RR10}).    

The main objective of this paper is to construct infinite families of $2$-designs and 
$3$-designs from linear codes. In addition, we determine the parameters of some known 
$t$-designs, and present many conjectured infinite families of $2$-designs that are based 
on projective ternary cyclic codes. 

\section{The classical construction of $t$-designs from codes and highly nonlinear functions}

Let $\C$ be a $[v, \kappa, d]$ linear code over $\gf(q)$. Let $A_i:=A_i(\C)$, which denotes the
number of codewords with Hamming weight $i$ in $\C$, where $0 \leq i \leq v$. The sequence 
$(A_0, A_1, \cdots, A_{v})$ is
called the \textit{weight distribution} of $\C$, and $\sum_{i=0}^v A_iz^i$ is referred to as
the \textit{weight enumerator} of $\C$. For each $k$ with $A_k \neq 0$, let $\cB_k$ denote
the set of supports of all codewords of Hamming weight $k$ in $\C$, where the coordinates of a codeword
are indexed by $(0,1,2, \cdots, v-1)$. Let $\cP=\{0, 1, 2, \cdots, v-1\}$.  The pair $(\cP, \cB_k)$
may be a $t$-$(v, k, \lambda)$ design for some positive integer $\lambda$. The following
theorems, developed by Assumus and Mattson, show that the pair $(\cP, \cB_k)$ defined by 
a linear code is a $t$-design under certain conditions.

\begin{theorem}\label{thm-AM1}[Assmus-Mattson Theorem \cite{AM74}, \cite[p. 303]{HP03}] 
Let $\C$ be a binary $[v, \kappa, d]$ code. Suppose $\C^\perp$ has minimum weight $d^\perp$.
Suppose that $A_i=A_i(\C)$ and $A_i^\perp=A_i(\C^\perp)$, for $0 \leq i \leq v$, are the
weight distributions of $\C$ and $\C^\perp$, respectively. Fix a positive integer $t$
with $t < d$, and let $s$ be the number of $i$ with $A_i^\perp \ne 0$ for $0 < i \leq v-t$.
Suppose that $s \leq d -t$. Then
\begin{itemize}
\item the codewords of weight $i$ in $\C$ hold a $t$-design provided that $A_i \ne 0$ and
      $d \leq i \leq v$, and
\item the codewords of weight $i$ in $\C^\perp$ hold a $t$-design provided that
      $A_i^\perp \ne 0$ and $d^\perp \leq i \leq v$.
\end{itemize}
\end{theorem}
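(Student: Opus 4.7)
The plan is to prove both bullets by establishing a single master claim: for every subset $S\subseteq\cP$ with $|S|\le t$, the weight distribution of the shortened code $\C^S=\{\bc|_{\cP\setminus S}:\bc\in\C,\ \bc|_S=\mathbf{0}\}$ depends only on $|S|$ (and analogously for shortenings of $\C^\perp$). First I would reduce the design condition to this claim. Fix a $t$-subset $T$ and let $B_i(S):=|\{\bc\in\C:\wt(\bc)=i,\ \bc|_S=\mathbf{0}\}|$ be the weight-$i$ frequency of $\C^S$. Writing $T\subseteq\support(\bc)$ as $T\cap(\cP\setminus\support(\bc))=\emptyset$ and applying inclusion-exclusion over subsets of $T$ gives
\[
N_i(T)\;:=\;|\{\bc\in\C:\wt(\bc)=i,\ T\subseteq\support(\bc)\}|\;=\;\sum_{S\subseteq T}(-1)^{|S|}B_i(S).
\]
Once $B_i(S)$ is known to depend only on $|S|$, this collapses to $\sum_{s=0}^{t}(-1)^s\binom{t}{s}B_i^{(s)}$, which depends only on $t$ and therefore exhibits the $t$-design with $\lambda=N_i(T)$.

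Next I would invoke the MacWilliams identity together with the standard shortening--puncturing duality $(\C^S)^\perp=(\C^\perp)_S$. This shows that the weight enumerator of $\C^S$ is linearly determined by that of the punctured dual $(\C^\perp)_S$, so the master claim reduces further to showing that the weight enumerator of $(\C^\perp)_S$ depends only on $|S|$ for every $|S|\le t$.

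The hard part will be this last reduction, which I would set up as a linear system. Each codeword $\bc^\perp\in\C^\perp$ of weight $w$ produces, upon puncturing at $S$, a vector of weight $w-|S\cap\support(\bc^\perp)|$. The hypothesis bounds the number of nonzero $A_j^\perp$ with $0<j\le v-t$ by $s$, and this in turn bounds the number of unknown weight frequencies of $(\C^\perp)_S$ in the relevant range. On the equation side, the assumption $t<d$ forces $B_1(S)=\cdots=B_{d-1}(S)=0$ for every $S$ (nonzero codewords of $\C^S$ inherit their Hamming weight from $\C$, so they have weight at least $d$); substituting these vanishings into MacWilliams for $\C^S$ produces linear equations on the unknown weight frequencies of $(\C^\perp)_S$ whose coefficients and right-hand sides depend only on $|S|$ and on the global $A_i,A_i^\perp$. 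The inequality $s\le d-t$ is exactly what makes this system uniquely solvable, so the weight enumerator of $(\C^\perp)_S$ is the same for every $S$ of a given size. The second bullet is then obtained by running the same three-step scheme with the roles of $\C$ and $\C^\perp$ swapped; the hypothesis is engineered so that the same bound controls both sides.

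The main obstacle is to verify that the linear system in the final step is genuinely nondegenerate and that its right-hand sides truly depend only on $|S|$. This rests on careful bookkeeping of how each weight class of $\C^\perp$ distributes among the punctured weights $w-j$ for $0\le j\le t$, and of exactly which of the $d-1$ vanishings $B_i(S)=0$ yield independent constraints. The counting is conceptually routine, but it is precisely where the sharp inequality $s\le d-t$ together with the truncation range $(0,v-t]$ in the definition of $s$ enters in the crucial way.
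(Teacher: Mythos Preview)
The paper does not prove Theorem~\ref{thm-AM1} at all: it is stated as a classical result with a citation to \cite{AM74} and \cite[p.~303]{HP03}, and is then used as a black box throughout. So there is no ``paper's own proof'' to compare against.

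That said, your outline is the standard Assmus--Mattson argument as presented, e.g., in the cited references: reduce the $t$-design property to the statement that the weight enumerator of the shortened code $\C^S$ depends only on $|S|$; pass via MacWilliams and the duality $(\C^S)^\perp=(\C^\perp)_S$ to the punctured dual; and then solve a linear system for the weight distribution of $(\C^\perp)_S$ using the $d-1$ vanishing low-weight coefficients as equations and the $s\le d-t$ nonzero dual weights as unknowns. Your identification of the nondegeneracy of this system as the crux is accurate, and the inclusion--exclusion reduction at the start is correct. One small point worth tightening: for the second bullet you say you would ``swap the roles of $\C$ and $\C^\perp$,'' but the hypothesis is asymmetric (it bounds $s$ in terms of $d$, not $d^\perp$). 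In the standard proof the second bullet does not come from a symmetric swap; rather, once you know the weight enumerator of $(\C^\perp)_S$ depends only on $|S|$, the same inclusion--exclusion formula applied directly to $\C^\perp$ yields the design property for its weight classes as well. You should make that explicit rather than appealing to symmetry.
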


The Assmus-Mattson Theorem for nonbinary codes is given as follows [Assmus-Mattson Theorem \cite{AM74}, \cite[p. 303]{HP03}]

\begin{theorem}\label{thm-AM2} 
Let $\C$ be a $[v, \kappa, d]$ code over $\gf(q)$. Suppose $\C^\perp$ has minimum weight $d^\perp$.
Let $w$ be the largest integer with $w \leq v$ satisfying
$$
w - \left\lfloor \frac{w+q-2}{q-1} \right\rfloor < d.
$$
(So $w=v$ when $q=2$.) Define $w^\perp$ analogously using $d^\perp$.

Suppose that $A_i=A_i(\C)$ and $A_i^\perp=A_i(\C^\perp)$, for $0 \leq i \leq v$, are the
weight distributions of $\C$ and $\C^\perp$, respectively. Fix a positive integer $t$
with $t < d$, and let $s$ be the number of $i$ with $A_i^\perp \ne 0$ for $0 < i \leq v-t$.
Suppose that $s \leq d -t$. Then
\begin{itemize}
\item the codewords of weight $i$ in $\C$ hold a $t$-design provided that $A_i \ne 0$ and
      $d \leq i \leq w$, and
\item the codewords of weight $i$ in $\C^\perp$ hold a $t$-design provided that
      $A_i^\perp \ne 0$ and $d^\perp \leq i \leq \min\{v-t, w^\perp\}$.
\end{itemize}
\end{theorem}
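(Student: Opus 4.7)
The plan is to follow the classical Assmus--Mattson argument via the MacWilliams identity applied to shortened codes, with the bound $w$ in the nonbinary setting arising from a pigeonhole estimate that ensures supports of low-weight codewords are shared only by scalar multiples.

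Fix a $t$-subset $T \subseteq \{0,1,\dots,v-1\}$ and let $\C(T)$ denote the length-$(v-t)$ code obtained by shortening $\C$ at $T$, i.e., by taking codewords of $\C$ vanishing on $T$ and deleting those coordinates. Standard shortening--puncturing duality gives $\C(T)^\perp = \C^\perp$ punctured on $T$. Write $A_i(T)$ for the number of weight-$i$ codewords in $\C(T)$; equivalently, $A_i(T)$ counts codewords of $\C$ of weight $i$ whose support misses $T$. Applying the MacWilliams identity expresses the weight enumerator of $\C(T)$ as a linear combination, indexed by the nonzero weights $j \in (0, v-t]$ of $\C^\perp$, of Krawtchouk-type polynomial evaluations. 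Since there are at most $s$ such weights and $s \le d-t$, and since $A_i(T)=0$ for $1 \le i < d$ by the minimum-distance hypothesis, a Vandermonde-style rank argument forces all $A_i(T)$ with $d \le i \le v-t$ to be uniquely determined by global invariants depending only on $t = |T|$. Hence $A_i(T)$ is independent of the choice of $T$, and by Möbius inversion over subsets of coordinates the number of weight-$i$ codewords whose support contains a prescribed $t$-subset is likewise independent of that subset.

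In the binary case the $t$-design property of supports follows at once. In the nonbinary case, one must additionally verify that a weight-$i$ codeword's support is shared only by its $q-1$ nonzero scalar multiples. If $c$ and $c'$ have a common support $S$ of size $i \le w$, a pigeonhole over $\gf(q)^{*}$ produces $\beta \in \gf(q)^{*}$ with $|\{j \in S : c_j = \beta c'_j\}| \ge \lceil i/(q-1)\rceil = \lfloor (i+q-2)/(q-1)\rfloor$; then $c - \beta c'$ is a codeword of weight at most $i - \lfloor (i+q-2)/(q-1)\rfloor \le w - \lfloor (w+q-2)/(q-1)\rfloor < d$, which must therefore vanish, forcing $c = \beta c'$. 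Dividing the codeword count by $q-1$ yields the design parameter $\lambda$. The same reasoning applied with $\C$ and $\C^\perp$ interchanged handles the dual, with $w^\perp$ replacing $w$ and the additional constraint $i \le v-t$ coming from the length of the dual shortened code. The main obstacle is the rank computation in the previous paragraph --- verifying that the MacWilliams constraints together with the minimum-distance vanishing conditions suffice to pin down the higher-weight $A_i(T)$ --- a familiar but delicate Krawtchouk-polynomial argument that is the technical heart of the Assmus--Mattson method.
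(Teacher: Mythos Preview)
The paper does not prove this theorem at all; it is stated with attribution to Assmus--Mattson and Huffman--Pless and then used as a black box, so there is no paper-side argument to compare against. Your sketch follows the classical Assmus--Mattson route, and the pigeonhole bound justifying the threshold $w$ is stated exactly right.

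There is, however, a genuine slip in the shortening/puncturing setup. You shorten $\C$ at $T$ and correctly identify $\C(T)^\perp$ as the puncturing of $\C^\perp$ on $T$, but then assert that MacWilliams expresses the weight enumerator of $\C(T)$ ``indexed by the nonzero weights $j\in(0,v-t]$ of $\C^\perp$.'' That does not follow: puncturing $\C^\perp$ on $T$ lowers each weight by anything from $0$ to $t$, so the punctured dual can have many more than $s$ nonzero weights, and both those weights and $\dim\C(T)$ generally depend on $T$. The Vandermonde/rank step therefore has no purchase in this direction.

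The standard argument reverses the two operations: puncture $\C$ on $T$ (injective since $t<d$, so dimension is preserved and the first $d-t$ coefficients of the punctured weight enumerator are known), and observe that its dual is the \emph{shortening} $(\C^\perp)_T$, whose nonzero weights are genuinely a subset of the nonzero weights of $\C^\perp$ lying in $(0,v-t]$, hence at most $s$ of them. Now the $s$ unknown multiplicities in $(\C^\perp)_T$ are determined by a Krawtchouk/Vandermonde system from the $d-t$ known low-order coefficients on the punctured-$\C$ side; these multiplicities are precisely the counts of $\C^\perp$-codewords of each weight with support disjoint from $T$, so they are independent of $T$. Iterating over $r$-subsets for all $r\le t$ (the hypothesis $s\le d-t$ propagates since at most $t-r$ extra weights can enter $(0,v-r]$) feeds your M\"obius-inversion step, and the pigeonhole argument on $w$ then converts constant codeword counts into constant block counts. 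In short: swap which code is shortened and which is punctured, and the rest of your outline goes through.
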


The Assmus-Mattson Theorems documented above are very powerful tools in constructing $t$-designs 
from linear codes. We will employ them heavily in this paper. It should be noted that the conditions 
in Theorems \ref{thm-AM1} and \ref{thm-AM2} are sufficient, but not necessary for obtaining 
$t$-designs. 

To construct $t$-designs via Theorems \ref{thm-AM1} and \ref{thm-AM2}, we will need the 
following lemma in subsequent sections, which is a variant of the MacWilliam Identity 
\cite[p. 41]{vanLint}. 

\begin{theorem} \label{thm-MI}
Let $\C$ be a $[v, \kappa, d]$ code over $\gf(q)$ with weight enumerator $A(z)=\sum_{i=0}^v A_iz^i$ and let
$A^\perp(z)$ be the weight enumerator of $\C^\perp$. Then
$$A^\perp(z)=q^{-\kappa}\Big(1+(q-1)z\Big)^vA\Big(\frac {1-z} {1+(q-1)z}\Big).$$
\end{theorem}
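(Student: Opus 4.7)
The plan is to derive the identity via the character-theoretic MacWilliams argument on the finite abelian group $G=\gf(q)^v$. Fix a nontrivial additive character $\chi$ of $\gf(q)$, and define $f(\bx)=z^{\wt(\bx)}$ as a polynomial in $z$ for each $\bx\in G$. The key tool is the Poisson summation identity for the subgroup $\C \subseteq G$:
$$\sum_{\by\in\C^\perp}f(\by)\;=\;\frac{|\C^\perp|}{|G|}\sum_{\bx\in\C}\hat{f}(\bx)\;=\;q^{-\kappa}\sum_{\bx\in\C}\hat{f}(\bx),$$
where $\hat{f}(\by)=\sum_{\bx\in G}\chi(\bx\cdot\by)\,f(\bx)$. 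This identity rests on the orthogonality relation $\sum_{\bx\in\C}\chi(\bx\cdot\by)=|\C|$ when $\by\in\C^\perp$ and $0$ otherwise, which uses that $\C^\perp$ is precisely the annihilator of $\C$ under the standard pairing.

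Next I would compute $\hat{f}$ explicitly. Since $f$ factorizes as $\prod_i z^{\wt(x_i)}$, so does its transform:
$$\hat{f}(\by)\;=\;\prod_{i=1}^{v}\Bigl(\sum_{x\in\gf(q)}\chi(y_ix)\,z^{\wt(x)}\Bigr)\;=\;\prod_{i=1}^v\Bigl(1+z\sum_{x\in\gf(q)^*}\chi(y_ix)\Bigr).$$
The inner sum equals $q-1$ when $y_i=0$ and $-1$ when $y_i\ne 0$ (the latter because $\chi$ is nontrivial, so $\sum_{x\in\gf(q)}\chi(x)=0$). Hence each factor is $1+(q-1)z$ or $1-z$ according as $y_i=0$ or $y_i\ne 0$, giving
$$\hat{f}(\by)\;=\;(1+(q-1)z)^{\,v-\wt(\by)}(1-z)^{\wt(\by)}.$$

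Substituting into Poisson summation, the left-hand side is $A^\perp(z)$ by definition, while the right-hand side factors as
$$q^{-\kappa}(1+(q-1)z)^v\sum_{\bx\in\C}\Bigl(\tfrac{1-z}{1+(q-1)z}\Bigr)^{\wt(\bx)}\;=\;q^{-\kappa}(1+(q-1)z)^v\,A\Bigl(\tfrac{1-z}{1+(q-1)z}\Bigr),$$
which is the claimed formula. The only mild subtlety is the appearance of the rational function $\tfrac{1-z}{1+(q-1)z}$; to be formally rigorous one first clears the denominator $(1+(q-1)z)^v$ and interprets the identity as one of polynomials in $z$, from which the rational form follows for all $z$ with $1+(q-1)z\ne 0$. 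This is not a genuine obstacle, and no other step involves any nontrivial difficulty once Poisson summation on $\gf(q)^v$ is in hand.
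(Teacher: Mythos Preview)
Your proof is correct; it is the standard character-theoretic derivation of the MacWilliams identity via Poisson summation on $\gf(q)^v$. The paper, however, does not prove this theorem at all: it is stated as a known tool, described as ``a variant of the MacWilliam Identity'' and cited to van~Lint's textbook \cite[p.~41]{vanLint}. So there is no proof in the paper to compare against; you have supplied the classical argument where the authors simply quote the literature.
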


A function $f$ from $\gf(q^m)$ to itself is called  {\em planar} or \textit{perfect nonlinear} (PN) if
  \[\max_{0\ne a\in\gf(q^m)}\max_{b\in\gf(q^m)}|\{x\in\gf(q^m): f(x+a)-f(x)=b\}|=1,\]
and {\em almost perfect nonlinear} (APN) if
  \[\max_{0\ne a\in\gf(q^m)}\max_{b\in\gf(q^m)}|\{x\in\gf(q^m): f(x+a)-f(x)=b\}|=2.\]
Later in this paper, we will employ such functions in the constructions of linear codes and thus 
our constructions of $t$-designs.

\section{Infinite families of $3$-designs from the binary RM codes}\label{sec-brmdesigns}

It was known that Reed-Muller codes give families of $3$-$(2^m, k, \lambda)$ designs 
(\cite[Chapter 15]{MS77}, \cite{Tonchevhb}). However, the parameters of $k$ and $\lambda$ 
may not be specifically given in the literature. The purpose of this section is to determine 
the parameters of some $3$-designs derived from binary Reed-Muller codes.

We use $\RM(r, m)$ to denote the binary Reed-Muller code of length $2^m$ and order $r$.
Note that $\RM(m-r, m)^\perp=\RM(r-1, m)$, where $2 \leq r < m$. The definition and 
information about binary Reed-Muller codes can be found in \cite[Section 4.5]{vanLint}  
and \cite[Chapters 13 and 14]{MS77}.

\begin{lemma}\label{lem-RMwt1}
The weight distribution of $\RM(m-2, m)$ (except $A_i=0$) is given by
\begin{eqnarray*}
A_{4k}= \frac{1}{2^{m+1}} \left[2\binom{2^m}{4k} + (2^{m+1}-2) \binom{2^{m-1}}{2k}\right]
\end{eqnarray*}
for $0 \leq k \leq 2^{m-2}$, and
by
\begin{eqnarray*}
A_{4k+2}= \frac{1}{2^{m+1}} \left[2\binom{2^m}{4k+2} - (2^{m+1}-2) \binom{2^{m-1}}{2k+1}\right]
\end{eqnarray*}
for $0 \leq k \leq 2^{m-2}-1$.
\end{lemma}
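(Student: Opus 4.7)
The plan is to compute $A(z)$ by applying the MacWilliams identity (Theorem~\ref{thm-MI}) to the dual code, whose weight enumerator is classical and completely explicit. Since $\RM(m-2,m)^\perp=\RM(1,m)$ by the identity recalled just before the lemma, and since the first-order Reed-Muller code $\RM(1,m)$ consists of the zero word, the all-one word, and $2^{m+1}-2$ words of weight $2^{m-1}$, its weight enumerator is
$$
A^\perp(z)\;=\;1+(2^{m+1}-2)\,z^{2^{m-1}}+z^{2^m}.
$$

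The main computation is then a direct substitution. I would apply Theorem~\ref{thm-MI} in the inverted form (equivalently, using $(\C^\perp)^\perp=\C$) with $q=2$, $v=2^m$, and $\kappa^\perp=m+1$, obtaining
$$
A(z)\;=\;2^{-(m+1)}(1+z)^{2^m}\,A^\perp\!\left(\frac{1-z}{1+z}\right).
$$
Plugging in the three-term expression for $A^\perp$ and using the factorization $(1+z)^{2^{m-1}}(1-z)^{2^{m-1}}=(1-z^2)^{2^{m-1}}$, the summands collapse neatly to
$$
A(z)\;=\;\frac{1}{2^{m+1}}\Big[(1+z)^{2^m}+(1-z)^{2^m}+(2^{m+1}-2)(1-z^2)^{2^{m-1}}\Big].
$$

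Finally I would extract coefficients via the binomial theorem. The first two summands combine to $2\sum_{j\ge 0}\binom{2^m}{2j}z^{2j}$, killing all odd powers of $z$ (consistent with the fact that $\RM(m-2,m)$ is an even-weight code), while the third contributes $(-1)^j\binom{2^{m-1}}{j}z^{2j}$. Writing the coefficient of $z^{2j}$ and splitting according to the residue of $j$ modulo $2$ — namely $j=2k$ (giving the $A_{4k}$ formula with a $+$ sign) and $j=2k+1$ (giving the $A_{4k+2}$ formula with a $-$ sign) — yields the two displayed expressions.

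The main pitfall is purely bookkeeping: one must keep track of the two distinct upper indices ($2^m$ versus $2^{m-1}$) and of the alternating sign $(-1)^j$ that forces the $4k+2$ case to pick up a minus where the $4k$ case picks up a plus. The admissible ranges $0\le k\le 2^{m-2}$ and $0\le k\le 2^{m-2}-1$ are read off from the support of the resulting power series (the top weight is $2^m$, and $4k+2\le 2^m-2$). Beyond this, the derivation is entirely routine and requires no further ideas.
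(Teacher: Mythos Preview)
Your proposal is correct and follows essentially the same route as the paper: apply the MacWilliams identity to the three-term weight enumerator of $\RM(1,m)=\RM(m-2,m)^\perp$, simplify to $\frac{1}{2^{m+1}}\big[(1+z)^{2^m}+(1-z)^{2^m}+(2^{m+1}-2)(1-z^2)^{2^{m-1}}\big]$, and read off coefficients by splitting the even powers into residues modulo $4$. The only cosmetic difference is that the paper writes the computation out line by line, while you describe it in words.
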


\begin{proof}
It is well known that the weight enumerator of $\RM(1, m)$ is
$$
1 + (2^{m+1}-2)z^{2^{m-1}} + z^{2^m}.
$$
By Theorem \ref{thm-MI}, the weight enumerator of RM$(m-2, m)$, which is the dual
of RM$(1, m)$, is given by
\begin{eqnarray*}
B(z) &=& \frac{1}{2^{m+1}} (1+z)^{2^m}\left[ 1 +
             (2^{m+1}-2) \left(\frac {1-z} {1+z}\right)^{2^{m-1}} + 
                 \left(\frac {1-z} {1+z}\right)^{2^m} \right] \\
&=& \frac{1}{2^{m+1}} \left[ (1+z)^{2^m} +
             (2^{m+1}-2) (1-z^2)^{2^{m-1}} + (1-z)^{2^m} \right] \\
&=& \frac{1}{2^{m+1}} \left[ 2 \sum_{i=0}^{2^{m-1}} \binom{2^m}{2i} z^{2i}
        +(2^{m+1}-2) \sum_{i=0}^{2^{m-1}} \binom{2^{m-1}}{i} (-1)^i z^{2i}\right] \\
&=&\frac{1}{2^{m+1}} \sum_{k=0}^{2^{m-2}} \left[ 2\binom{2^m}{4k} + (2^{m+1}-2) \binom{2^{m-1}}{2k} \right]
     z^{4k} + \\
& & \frac{1}{2^{m+1}}\sum_{k=0}^{2^{m-2}-1} \left[ 2\binom{2^m}{4k+2} - (2^{m+1}-2) \binom{2^{m-1}}{2k+1}\right]
     z^{4k+2}.
\end{eqnarray*}
The desired conclusion then follows.
\end{proof}

The following theorem gives parameters of all the $3$-designs in both $\RM(m-2, m)$ and $\RM(1, m)$.

\begin{theorem}\label{thm-brmdesign1}
Let $m \geq 3$. Then $\RM(m-2, m)$ has dimension $2^m -m-1$ and minimum distance $4$.
For even positive integer $\kappa$ with $4 \leq \kappa \leq 2^m-4$, the supports of
the codewords with weight $\kappa$ in $\RM(m-2, m)$ hold a $3$-$(2^m, \kappa, \lambda)$
design, where
\begin{eqnarray*}
\lambda=\left\{
\begin{array}{ll}
\frac{\frac{1}{2^{m+1}}\binom{\kappa}{3}\left(2\binom{2^m}{4k} + (2^{m+1}-2) \binom{2^{m-1}}{2k} \right)}{\binom{2^m}{3}}  & \mbox{ if } \kappa = 4k, \\
\frac{\frac{1}{2^{m+1}}\binom{\kappa}{3}\left(2\binom{2^m}{4k+2} - (2^{m+1}-2) \binom{2^{m-1}}{2k+1} \right)}{\binom{2^m}{3}}  & \mbox{ if } \kappa = 4k+2.
\end{array}
\right.
\end{eqnarray*}

The supports of all codewords of weight $2^{m-1}$ in $\RM(1, m)$ hold a $3$-$(2^m, 2^{m-1}, 2^{m-2}-1)$
design.

\end{theorem}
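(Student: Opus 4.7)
The plan is to apply the binary Assmus-Mattson theorem (Theorem \ref{thm-AM1}) with $t = 3$ to the pair $\C = \RM(m-2, m)$ and $\C^\perp = \RM(1, m)$, and then to read off $\lambda$ from the weight distribution via the standard counting identity $\lambda\binom{v}{t} = b\binom{\kappa}{t}$.

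First I would record the basic parameters: $\C$ has length $v = 2^m$, dimension $2^m - m - 1$, and minimum distance $d = 4$, while $\C^\perp = \RM(1, m)$ has the well-known weight enumerator $1 + (2^{m+1} - 2)z^{2^{m-1}} + z^{2^m}$, so $d^\perp = 2^{m-1}$. To invoke Theorem \ref{thm-AM1} with $t = 3$, the number $s$ of integers $i$ with $0 < i \le v - t = 2^m - 3$ and $A_i^\perp \ne 0$ must be at most $d - t = 1$. Since the only nonzero weight of $\C^\perp$ in this range is $i = 2^{m-1}$ (which lies in the range for every $m \ge 3$), we get $s = 1$, and the hypothesis is satisfied with equality.

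The theorem then delivers both conclusions at once: for every $\kappa$ with $A_\kappa \ne 0$ and $4 \le \kappa \le 2^m$, the supports of weight-$\kappa$ codewords in $\RM(m-2, m)$ form a $3$-design on $\{0,1,\dots,2^m-1\}$; and similarly the supports of weight-$2^{m-1}$ codewords in $\RM(1, m)$ form a $3$-design. Because the ground field is $\gf(2)$, each support is the support of exactly one codeword, so the number of blocks equals $A_\kappa$ (respectively $A_{2^{m-1}}^\perp = 2^{m+1} - 2$). Hence
\[
\lambda = \frac{A_\kappa \binom{\kappa}{3}}{\binom{2^m}{3}},
\]
and substituting the expressions for $A_{4k}$ and $A_{4k+2}$ from Lemma \ref{lem-RMwt1} yields the two stated formulas for the first part. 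For the $\RM(1,m)$ statement, the same identity reduces, after routine arithmetic, to $\lambda = 2^{m-2} - 1$.

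The restriction $4 \le \kappa \le 2^m - 4$ in the first part is simply a matter of excluding trivial and vacuous cases: the values $\kappa \in \{0, 2^m\}$ correspond to trivial designs, while a direct computation from Lemma \ref{lem-RMwt1} (consistent with the fact that $d = 4$) shows $A_2 = A_{2^m - 2} = 0$. I do not foresee a genuine obstacle; the whole argument is a clean application of the Assmus-Mattson theorem, and the only step demanding any care is verifying that the hypothesis $s \le d - t$ holds on the nose, together with the algebraic simplification of $(2^{m+1}-2)\binom{2^{m-1}}{3}/\binom{2^m}{3}$ to $2^{m-2} - 1$.
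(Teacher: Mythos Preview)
Your proposal is correct and follows essentially the same route as the paper's own proof: apply Theorem~\ref{thm-AM1} with $t=3$ to $\C=\RM(m-2,m)$ and $\C^\perp=\RM(1,m)$, verify $s=1=d-t$ from the weight enumerator of $\RM(1,m)$, use the fact that distinct binary codewords have distinct supports so that $b=A_\kappa$, and read off $\lambda$ from Lemma~\ref{lem-RMwt1}. Your explanation of the range $4\le\kappa\le 2^m-4$ and the explicit simplification to $\lambda=2^{m-2}-1$ for the $\RM(1,m)$ design are a bit more detailed than the paper, but the argument is the same.
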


\begin{proof}
Note that the weight distribution of $\RM(1,m)$ is given by
$$
A_0=1, \ A_{2^m}=1, \ A_{2^{m-1}}=2^{m+1}-2, \ \mbox{ and } A_i =0 \mbox{ for all other $i$.}
$$
It is known that the minimum distance $d$ of $\RM(m-2, m)$ is equal to $4$. Put $t=3$. The number of
$i$ with $A_i^\perp \neq 0$ and $1 \leq i \leq 2^m -3$ is $s=1$. Hence, $s=d-t$. Notice that two 
binary vectors have the same support if and only if they are equal. 
The desired conclusions then follow from Theorem \ref{thm-AM1} and Lemma \ref{lem-RMwt1}.
\end{proof}

As a corollary of Theorem \ref{thm-brmdesign1}, we have the following \cite[p. 63]{MS77}, which is well 
known.

\begin{corollary}
The minimum weight codewords in $\RM(m-2, m)$ form a $3$-$(2^m, 4, 1)$ design, i.e., a Steiner system.
\end{corollary}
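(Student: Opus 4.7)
The plan is to deduce the corollary directly from Theorem~\ref{thm-brmdesign1} applied with $\kappa=4$, since $4$ is the minimum distance of $\RM(m-2,m)$. The theorem already guarantees that the supports of the weight-$4$ codewords form a $3$-$(2^m,4,\lambda)$ design for some $\lambda$, so the only task left is to pin down that $\lambda=1$.

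First, I would extract the number $A_4$ of minimum weight codewords by specializing Lemma~\ref{lem-RMwt1} to $k=1$ (the $4k$ case with $k=1$), obtaining
\[
A_4 \;=\; \frac{1}{2^{m+1}}\left[\,2\binom{2^m}{4} + (2^{m+1}-2)\binom{2^{m-1}}{2}\right].
\]
Then I would simplify the right-hand side. Using $2^{m+1}-2 = 2(2^m-1)$ and $2^m-2 = 2(2^{m-1}-1)$, one can factor out $(2^m-1)\cdot 2^{m-1}$ from the bracket and reduce the remaining factor by observing that $(2^m-2)(2^m-3)/6 + (2^{m-1}-1) = (2^{m-1}-1)(2^m/3)$. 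This collapses the expression to
\[
A_4 \;=\; \frac{2^m(2^m-1)(2^m-2)}{24} \;=\; \frac{1}{4}\binom{2^m}{3}.
\]

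Finally, I would invoke the standard counting identity $\lambda\binom{v}{t} = A_{\kappa}\binom{\kappa}{t}$ valid for any $t$-$(v,\kappa,\lambda)$ design, which with $v=2^m$, $t=3$, $\kappa=4$ gives
\[
\lambda \;=\; \frac{A_4\binom{4}{3}}{\binom{2^m}{3}} \;=\; \frac{4\cdot \tfrac{1}{4}\binom{2^m}{3}}{\binom{2^m}{3}} \;=\; 1.
\]
Hence the minimum weight supports form an $S(3,4,2^m)$ Steiner system.

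There is really no serious obstacle: the design structure is handed to us by Theorem~\ref{thm-brmdesign1}, and the whole content of the corollary is the identity $A_4 = \binom{2^m}{3}/4$. The only mildly delicate step is the algebraic simplification of $A_4$; doing it by first pulling out $(2^m-1)2^{m-1}$ and then rewriting $2^m-2$ as $2(2^{m-1}-1)$ keeps the computation short and avoids splitting into cases on $m$.
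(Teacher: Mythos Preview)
Your proposal is correct and follows exactly the route the paper intends: the corollary is stated immediately after Theorem~\ref{thm-brmdesign1} with no proof beyond ``As a corollary of Theorem~\ref{thm-brmdesign1}'', so specializing that theorem to $\kappa=4$ and simplifying the resulting expression for $\lambda$ is precisely what is required. Your computation $A_4=\binom{2^m}{3}/4$ and hence $\lambda=1$ is the content the paper leaves implicit.
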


The following theorem is also well known, and tells us that Reed-Muller codes give much more $3$-designs 
\cite{Tonchevhb}. 

\begin{theorem}\label{thm-brmdesign2}
Let $m \geq 4$ and $2 \leq r < m$. Then $\RM(m-r, m)$ has dimension $2^m -\sum_{i=0}^{r-1} \binom{m}{i}$ and
minimum distance $2^r$. For every nonzero weight $\kappa$ in $\RM(m-r, m)$, the codewords
of weight $\kappa$ in RM$(m-r, m)$ hold a $3$-$(2^m, \kappa, \lambda)$ design.
\end{theorem}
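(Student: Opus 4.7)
The plan is to combine three classical ingredients about binary Reed-Muller codes. First, the dimension and minimum distance of $\RM(m-r,m)$ are standard: $\dim \RM(m-r,m) = \sum_{i=0}^{m-r}\binom{m}{i}$, which equals $2^m-\sum_{i=0}^{r-1}\binom{m}{i}$ by the binomial identity, while $d(\RM(m-r,m))=2^r$. These are documented in \cite[Chapter~13]{MS77} and \cite[\S4.5]{vanLint}, so I would simply cite them rather than reprove them.

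For the design assertion, the cleanest route is via automorphism groups rather than a direct application of the Assmus-Mattson theorem (which for general $r$ would demand careful bookkeeping of the distinct nonzero weights in the dual $\RM(r-1,m)$ and could fail the $s\leq d-t$ condition). I would start from the classical fact that the permutation automorphism group $\PAut(\RM(m-r,m))$ contains the general affine group $\mathrm{AGL}(m,2)$, with the $2^m$ coordinates identified in the natural way with the points of $\gf(2)^m$. Then I would verify that $\mathrm{AGL}(m,2)$ acts $3$-transitively on $\gf(2)^m$: given any ordered triple $(x_1,x_2,x_3)$ of distinct points, the translation $y\mapsto y-x_1$ reduces matters to triples of the form $(0,x_2',x_3')$ with $x_2',x_3'$ distinct and nonzero; over $\gf(2)$ any two such vectors are automatically linearly independent, so $\GL(m,2)$ acts transitively on ordered pairs of distinct nonzero vectors, which in turn yields $3$-transitivity of $\mathrm{AGL}(m,2)$.

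The final step invokes the general principle that if a linear code $\C$ of length $v$ has a permutation automorphism group acting $t$-transitively on the coordinate set $\cP$, then for every weight $\kappa$ with $A_\kappa\neq 0$ the supports of the weight-$\kappa$ codewords form a $t$-$(v,\kappa,\lambda)$ design. The reason is that $\PAut(\C)$ permutes $\cB_\kappa$ (since automorphisms preserve Hamming weight), and $t$-transitivity on $\cP$ forces the number of blocks of $\cB_\kappa$ containing a given $t$-subset $T\subseteq\cP$ to be independent of $T$. Applying this with $t=3$ and $\C=\RM(m-r,m)$ yields the $3$-design; the value of $\lambda$ comes out of the counting identity $\lambda\binom{v}{3}=|\cB_\kappa|\binom{\kappa}{3}$.

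The only genuine obstacle is verifying the $3$-transitivity of $\mathrm{AGL}(m,2)$, which hinges on the characteristic-$2$ fact that any two distinct nonzero vectors are linearly independent; over $\gf(q)$ with $q>2$, $\mathrm{AGL}(m,q)$ is only $2$-transitive, and this explains why the theorem produces $3$-designs rather than merely $2$-designs in the binary case. All other steps are citations of classical material.
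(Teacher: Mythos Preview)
Your proposal is correct and follows essentially the same route as the paper: the paper's proof simply cites that the automorphism group of $\RM(m-r,m)$ is triply transitive (Theorem~24 in \cite[p.~400]{MS77}) and then invokes the transitivity-implies-design principle (Theorem~8.4.7 in \cite[p.~308]{HP03}). Your version is slightly more expansive in that you unpack \emph{why} $\mathrm{AGL}(m,2)$ is $3$-transitive (the characteristic-$2$ observation that distinct nonzero vectors are linearly independent) rather than just citing it, but the architecture of the argument is identical.
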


\begin{proof}
Since $2 \leq r < m$, by Theorem 24 in \cite[p. 400]{MS77}, the automorphism group of $\RM(m-r, m)$ 
is triply transitive. The desired conclusion then follows from Theorem 8.4.7 in \cite[p. 308]{HP03}. 
\end{proof}

Determining the weight distribution of $\RM(m-r, m)$ may be hard for $3 \leq r \leq m-3$ in general. 
Therefore, it may be difficult to find out the parameters $(\kappa, \lambda)$ of all the $3$-designs. 
The following problem is open in general.  

\begin{open} 
Determine the weight distribution of $\RM(m-r, m)$ for $3 \leq r \leq m-3$. 
\end{open} 

Some progress on the open problem above was made by Kasami and Tokura \cite{KT} and 
Kasami, Tokura and Azumi \cite{KTA}. Detailed information on this problem can be found 
in \cite[Chapter 15]{MS77}.

\section{Designs from cyclic Hamming codes}\label{sec-hmdesigns}

Let $\alpha$ be a generator of $\gf(q^m)^*$. Set $\beta=\alpha^{q-1}$. Let $g(x)$ be the minimal polynomial of $\beta$ over
$\gf(q)$. Let $\C_{(q,m)}$ denote the cyclic code of length $v=(q^m-1)/(q-1)$ over $\gf(q)$ with generator polynomial $g(x)$.
Then $\C_{(q,m)}$ has parameters $[(q^m-1)/(q-1), (q^m-1)/(q-1)-m, d]$, where $d \in \{2,3\}$.
When $\gcd(q-1, m)=1$, $\C_{(q,m)}$ has minimum weight $3$ and is equivalent to the Hamming code.

\begin{lemma}\label{lem-HCwt}
The weight distribution of $\C_{(q,m)}$ is given by
\begin{eqnarray*}
A_{k}= \frac{1}{q^m} \sum_{\substack {0 \le i \le (q^{m-1}-1)/(q-1) \\0 \le j \le q^{m-1} \\ i+j=k}}\left[\binom{\frac{q^{m-1}-1}{q-1}}{i}
 \binom{q^{m-1}}{j}\Big((q-1)^k+(-1)^j(q-1)^i(q^m-1)\Big)\right]
\end{eqnarray*}
for $0 \leq k \leq (q^m-1)/(q-1)$.
\end{lemma}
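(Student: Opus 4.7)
The plan is to apply the MacWilliams identity (Theorem \ref{thm-MI}) after first pinning down the weight enumerator of the dual code $\C_{(q,m)}^{\perp}$. The dual has dimension $m$, and a generator matrix is obtained from the parity-check matrix of $\C_{(q,m)}$, whose $v$ columns are the powers $1,\beta,\dots,\beta^{v-1}$ expanded over an $\gf(q)$-basis of $\gf(q^m)$. Equivalently, a generic codeword of the dual is $c_a=(\Tr(a\beta^i))_{i=0}^{v-1}$ for $a\in\gf(q^m)$. In the regime $\gcd(q-1,m)=1$, where $\C_{(q,m)}$ is equivalent to the Hamming code and the columns represent distinct points of $\PG(m-1,q)$, the dual is the simplex code: every nonzero $c_a$ has weight exactly $q^{m-1}$. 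That gives
\[
A^{\perp}(z)=1+(q^m-1)\,z^{q^{m-1}}.
\]

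Substituting into Theorem \ref{thm-MI} and multiplying through,
\[
A(z)=\frac{1}{q^m}\Big[(1+(q-1)z)^v+(q^m-1)\,(1+(q-1)z)^{\,v-q^{m-1}}(1-z)^{q^{m-1}}\Big].
\]
The arithmetic identity $v-q^{m-1}=(q^{m-1}-1)/(q-1)$ is the key reason that the two binomial coefficients appearing in the claimed formula are $\binom{(q^{m-1}-1)/(q-1)}{i}$ and $\binom{q^{m-1}}{j}$. To unify both summands into a single convolution indexed by $i+j=k$, I would split
\[
(1+(q-1)z)^v=(1+(q-1)z)^{(q^{m-1}-1)/(q-1)}\cdot(1+(q-1)z)^{q^{m-1}}
\]
and expand each factor with the binomial theorem.

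Reading off the coefficient of $z^k$ on both sides then produces the stated expression: the first summand contributes $\binom{(q^{m-1}-1)/(q-1)}{i}\binom{q^{m-1}}{j}(q-1)^{i+j}=\binom{(q^{m-1}-1)/(q-1)}{i}\binom{q^{m-1}}{j}(q-1)^k$, while the second contributes $(q^m-1)\binom{(q^{m-1}-1)/(q-1)}{i}\binom{q^{m-1}}{j}(-1)^j(q-1)^i$, and dividing by $q^m$ recovers the formula. I expect the only real obstacle to be the initial identification $A^{\perp}(z)=1+(q^m-1)z^{q^{m-1}}$, i.e.\ that $\C_{(q,m)}^{\perp}$ is a one-weight simplex code; this is standard in the Hamming regime, and everything that follows is mechanical binomial bookkeeping.
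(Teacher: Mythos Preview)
Your proposal is correct and follows essentially the same approach as the paper: identify $\C_{(q,m)}^\perp$ as the simplex code with weight enumerator $1+(q^m-1)z^{q^{m-1}}$, apply the MacWilliams identity, factor out $(1+(q-1)z)^{(q^{m-1}-1)/(q-1)}$ using $v-q^{m-1}=(q^{m-1}-1)/(q-1)$, and expand binomially. The paper states the factored form and simply says ``the desired conclusion then follows,'' whereas you spell out the coefficient extraction explicitly, but the substance is identical.
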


\begin{proof}
$\C_{(q,m)}^\perp$ is the simplex code, as $\gcd(q-1, (q^{m}-1)/(q-1))=1$. Its weight enumerator
is
$$
1+(q^m-1)z^{q^{m-1}}.
$$
By Theorem \ref{thm-MI}, the weight enumerator of $\C_{(q,m)}$ is given by
\begin{eqnarray*}
A(z) &=& \frac{1}{q^m} (1+(q-1)z)^{v}\left[ 1 +
             (q^{m}-1) \left(\frac {1-z} {1+(q-1)z}\right)^{q^{m-1}} \right] \\
&=& \frac{1}{q^m} \left[ (1+(q-1)z)^{v} +
             (q^m-1) (1-z)^{q^{m-1}}(1+(q-1)z)^{\frac {q^{m-1}-1}{q-1}} \right] \\
&=& \frac{1}{q^m} (1+(q-1)z)^{\frac {q^{m-1}-1}{q-1}} \left[ (1+(q-1)z)^{q^{m-1}} +
             (q^m-1) (1-z)^{q^{m-1}} \right].
\end{eqnarray*}
The desired conclusion then follows.
\end{proof}

A code of minimum distance $d=2e+1$ is \textit{perfect}, if the spheres of radius 
$e$ around the codewords cover the whole space. The following theorem introduces 
a relation between perfect codes and $t$-designs and is due to Assmus and Mattson 
\cite{AM74}. 

\begin{theorem}\label{thm-perfectcodedesign}
A linear $q$-ary code of length $v$ and minimum distance $d=2e+1$ is perfect if and 
only if the supports of the codewords of minimum weight form a simple $(e+1)$-$(v, 2e+1, 
(q-1)^e)$ design. In particular, the minimum weight codewords in a linear or nonlinear 
perfect code, which contains the zero vector, form a Steiner system $S(e+1, 2e+1, v)$. 
\end{theorem}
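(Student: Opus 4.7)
The plan is to deduce both implications from a single rigidity observation, followed by a counting argument in one direction and a covering/minimal-counterexample argument in the other. The rigidity observation is: if $c \in \C$ has weight $2e+1$ and $x \in \gf(q)^v$ has weight $e+1$ with $\hd(x,c) \le e$, then necessarily $\hd(x,c) = e$ (since $\hd(x,c) \ge \wt(c)-\wt(x) = e$), and partitioning the coordinates according to inside/outside $\mathrm{supp}(x)$ and agree/disagree with $c$ forces $\mathrm{supp}(x) \subset \mathrm{supp}(c)$ together with $c_i = x_i$ for every $i \in \mathrm{supp}(x)$; equivalently, $c$ extends $x$ by exactly $e$ new nonzero coordinates. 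This rigidity powers both directions.

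For the necessity direction, I would fix an $(e+1)$-subset $T$ of the coordinates and consider the $(q-1)^{e+1}$ vectors $x$ with $\mathrm{supp}(x) = T$. Each such $x$ has weight $e+1 > e$, so by perfectness it lies in the sphere of radius $e$ around a unique codeword $c_x$; the triangle inequality bounds $\wt(c_x) \le 2e+1$, while $c_x \ne 0$ and the minimum distance force $\wt(c_x) = 2e+1$. By the rigidity above, $\mathrm{supp}(c_x) \supset T$ and $c_x|_T = x|_T$, so $x \mapsto c_x$ is a bijection onto the set of minimum-weight codewords whose support contains $T$, yielding $(q-1)^{e+1}$ such codewords. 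Linearity then implies that any two minimum-weight codewords with the same support are scalar multiples of each other (otherwise a suitable nonzero combination would have weight at most $2e$, violating the minimum distance), so those $(q-1)^{e+1}$ codewords partition into $(q-1)^e$ distinct supports, giving $\lambda = (q-1)^e$.

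For the sufficiency direction, I would first show that every weight-$(e+1)$ vector is within distance $e$ of a minimum-weight codeword, then extend to all vectors. Given $x$ with $\mathrm{supp}(x) = T$, the design supplies $(q-1)^e$ blocks containing $T$, and linearity lifts these to $(q-1)^{e+1}$ minimum-weight codewords whose support contains $T$. The restriction map $c \mapsto c|_T$ into $(\gf(q)^*)^T$ is injective (if $c|_T = c'|_T$ then $c-c'$ has weight at most $2e$ and hence vanishes) and, by matching cardinalities, a bijection, so a unique such $c$ matches $x$ on $T$ and satisfies $\hd(x,c)=e$. To handle arbitrary vectors, I would pick a counterexample $x$ of minimum weight $w$ (a vector not within $e$ of any codeword); then $w > e+1$, and for any $(e+1)$-subset $T \subset \mathrm{supp}(x)$ the previous step applied to $x|_T$ produces a codeword $c$ with $c|_T = x|_T$, so $\wt(x-c) \le w-1$. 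By minimality $x-c$ lies within $e$ of some $c' \in \C$, and then $x$ lies within $e$ of $c+c' \in \C$, a contradiction. Since the spheres of radius $e$ about codewords are pairwise disjoint (by minimum distance $2e+1$) and now cover $\gf(q)^v$, the code is perfect.

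The main obstacle is the minimal-counterexample step in the sufficiency direction; it is there that linearity of $\C$ is genuinely required, both to form $x-c$ inside the code and to observe that $c+c'$ is again a codeword. For the final \emph{in particular} clause, specializing to $q = 2$ makes $(q-1)^e = 1$, and since a binary vector is determined by its support, the bijection $x \mapsto c_x$ of the necessity argument sends the unique $x$ with $\mathrm{supp}(x) = T$ to the unique minimum-weight codeword whose support contains $T$, without any appeal to linearity; this yields the Steiner system $S(e+1, 2e+1, v)$ for both linear and nonlinear perfect codes containing the zero vector.
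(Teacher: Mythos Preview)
The paper does not supply its own proof of this theorem; it is simply quoted and attributed to Assmus and Mattson \cite{AM74}, so there is no argument in the paper to compare against.

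Your argument for the biconditional is correct. The rigidity observation is exactly right (and can be checked cleanly by setting $A$, $B$, $C$, $D$ to the agree/disagree pieces inside and outside $\mathrm{supp}(x)$ and solving the linear system for their sizes). In the necessity direction, the bijection $x\mapsto c_x$ and the scalar-multiple argument are sound; in the sufficiency direction, your injectivity claim for $c\mapsto c|_T$ is justified by $\wt(c-c')\le |\mathrm{supp}(c)\cup\mathrm{supp}(c')|-|T|\le (3e+1)-(e+1)=2e$, and the minimal-counterexample descent is clean, with linearity used precisely where you flag it. One small point worth adding explicitly in the sufficiency direction: the \emph{simplicity} hypothesis on the design is what guarantees that the $(q-1)^e$ blocks through $T$ are genuinely distinct supports, so that lifting by scalars really produces $(q-1)^{e+1}$ distinct codewords.

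Regarding the ``In particular'' clause: as literally stated it cannot hold for $q>2$ (already the first sentence of the theorem gives $\lambda=(q-1)^e>1$ for the linear $q$-ary Hamming code), so your reading of it as the specialization $q=2$ is the intended one. Your remark that the necessity argument for $q=2$ uses only perfectness and the presence of $\mathbf{0}$, not linearity, correctly covers the nonlinear binary case.
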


It is known that the Hamming code over $\gf(q)$ is perfect, and the codewords of 
weight $3$ hold a $2$-design by Theorem \ref{thm-perfectcodedesign}. The $2$-designs 
documented in the following theorem may be viewed as an extension of this result. 

\begin{theorem}\label{thm-HMdesign171}
Let $m \geq 3$ and $q = 2$ or $m \geq 2$ and $q >2$,  and let $\gcd(q-1, m)=1$. 
Let $\cP=\{0,1,2, \cdots, (q^m-q)/(q-1)\}$, and let $\cB$ be the set of the supports of the codewords of
 Hamming weight $k$ with $A_k \neq 0$ in $\C_{(q,m)}$, where $3 \leq k \leq w$ and $w$ is the 
 largest such that $w-\lfloor (w+q-2)/(q-1) \rfloor < 3$. Then $(\cP, \cB)$ 
is a $2$-$((q^m-1)/(q-1), k, \lambda)$ design. In particular, the supports 
of codewords of weight $3$ in $\C_{(q,m)}$ form a $2$-$((q^m-1)/(q-1), 3, q-1)$ 
design.  

The supports of all codewords of weight $q^{m-1}$ in $\C_{(q,m)}^\perp$ form a 
$2$-$((q^m-1)/(q-1), q^{m-1}, \lambda)$ design, where 
$$ 
\lambda=(q-1)q^{m-2}. 
$$

\end{theorem}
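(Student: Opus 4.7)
The plan is to apply the nonbinary Assmus-Mattson Theorem (Theorem~\ref{thm-AM2}) with $t=2$ to the code $\C_{(q,m)}$ and its dual, and then pin down the two explicit values of $\lambda$ by short combinatorial counts.

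First, I would verify the hypotheses of Theorem~\ref{thm-AM2} with $t=2$. Since $\gcd(q-1,m)=1$, the code $\C_{(q,m)}$ is equivalent to the Hamming code and so has $d=3>t$. From the proof of Lemma~\ref{lem-HCwt}, the dual $\C_{(q,m)}^\perp$ is the simplex code with weight enumerator $1+(q^m-1)z^{q^{m-1}}$; hence the number $s$ of nonzero weights of $\C_{(q,m)}^\perp$ in the range $(0,v-t]$ is at most $1=d-t$. Theorem~\ref{thm-AM2} then yields two conclusions: the supports of the weight-$k$ codewords of $\C_{(q,m)}$ form a 2-design whenever $A_k\neq 0$ and $3\le k\le w$, and the supports of the weight-$q^{m-1}$ codewords of $\C_{(q,m)}^\perp$ form a 2-design provided $q^{m-1}\le \min\{v-2,w^\perp\}$.

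Next, I would identify $\lambda$ for the weight-3 codewords of $\C_{(q,m)}$ by invoking Theorem~\ref{thm-perfectcodedesign}: the Hamming code is perfect with $d=3=2(1)+1$, so the theorem applies with $e=1$ and directly produces a simple 2-$(v,3,(q-1)^1)$ design, giving $\lambda=q-1$.

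Finally, for the weight-$q^{m-1}$ codewords of $\C_{(q,m)}^\perp$, I would compute $\lambda$ from the standard design identity $\lambda\binom{v}{2}=b\binom{k}{2}$. The simplex code has $q^m-1$ nonzero codewords, all of weight $q^{m-1}$; since any two codewords with the same support differ by a nonzero scalar, the number of distinct supports is $b=(q^m-1)/(q-1)=v$. Substituting $k=q^{m-1}$ and $b=v$ and simplifying gives $\lambda=(q-1)q^{m-2}$.

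The main obstacle I foresee is the bound $q^{m-1}\le v-2$ required by the dual part of Theorem~\ref{thm-AM2}: a direct check shows this fails exactly when $m=2$ (in which case $v-2=q-1<q=q^{m-1}$). For those boundary cases I would obtain the 2-design property separately from the 2-transitivity of $\mathrm{PGL}_m(q)$ on $\cP$, which forces each pair of coordinates to lie in the same number of supports of weight-$q^{m-1}$ codewords of $\C_{(q,m)}^\perp$.
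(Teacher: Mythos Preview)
Your proposal is correct and follows essentially the same route as the paper: verify the Assmus--Mattson hypotheses with $t=2$ using the simplex weight enumerator of $\C_{(q,m)}^\perp$, and then pin down the two explicit values of $\lambda$ by counting blocks.

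There are two minor but genuine differences worth noting. First, for the weight-$3$ codewords the paper does not invoke Theorem~\ref{thm-perfectcodedesign}; instead it computes $A_3=(q^m-1)(q^m-q)/6$ from Lemma~\ref{lem-HCwt}, divides by $q-1$ to get the number of blocks, and then solves the design identity for $\lambda$. Your shortcut via the perfect-code theorem is cleaner and uses a result already available in the paper. Second, the paper's proof does not flag the boundary issue you raise: when $m=2$ one has $q^{m-1}=q>q-1=v-2$, so the dual clause of Theorem~\ref{thm-AM2} as stated does not literally cover that case. Your patch via the $2$-transitive action of $\mathrm{PGL}_m(q)$ on the coordinates is a valid way to close that gap; alternatively, for $m=2$ the simplex supports are exactly the $v$ point-complements in a $v$-set, and the $2$-design property with $\lambda=v-2=q-1=(q-1)q^{m-2}$ is immediate by inspection.
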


\begin{proof}
$\C_{(q,m)}^\perp$ is the simplex code, as $\gcd(q-1, (q^{m}-1)/(q-1))=1$. Its weight enumerator
is
$$
1+(q^m-1)z^{q^{m-1}}.
$$
A proof of this weight enumerator is straightforward and can be found in \cite[Theorem 15]{DY13}. 

Recall now Theorem \ref{thm-AM2} and the definition of $w$ for $\C_{(q,m)}$ 
and $w^\perp$ for $\C_{(q,m)}^\perp$. Since $\C_{(q,m)}$ has minimum weight 
$3$. Given that the weight enumerator of $\C_{(q,m)}^\perp$ is 
$1+(q^m-1)z^{q^{m-1}},$ we deduce that $w^\perp=q^{m-1}$. Put $t=2$. It then 
follows that $s=1=d-t$. The desired conclusion on the $2$-design property 
then follows from Theorem \ref{thm-AM2} and Lemma \ref{lem-HCwt}. 

We now prove that the supports of codewords of weight $3$ in $\C_{(q,m)}$ 
form a $2$-$((q^m-1)/(q-1), 3, q-1)$ design. We have already proved that 
these supports form a $2$-$((q^m-1)/(q-1), 3, \lambda)$ design. To determine the 
value $\lambda$ for this design, we need to compute the total number $b$ 
of blocks in this design. To this end, we first compute the total number 
of codewords of weight $3$ in $\C_{(q,m)}$. It follows from Lemma \ref{lem-HCwt} 
that 
$$ 
A_3=\frac{(q^m-1)(q^m-q)}{6}. 
$$  
Since $3$ is the minimum nonzero weight in $\C_{(q,m)}$, it is easy to see that 
two codewords of weight $3$ in $\C_{(q,m)}$ have the same support if and only 
one is a scalar multiple of another. Thus, the total number $b$ of blocks is given by  
$$ 
b:=\frac{A_3}{q-1}=\frac{(q^m-1)(q^m-q)}{6(q-1)}. 
$$ 
It then follows that 
$$ 
\lambda=\frac{b\binom{3}{2}}{\binom{\frac{q^m-1}{q-1}}{2}}=q-1. 
$$

Let $\alpha$ be a generator of $\gf(q^m)^*$, and set $\beta=\alpha^{q-1}$. 
Then $\beta$ is a $v$-th primitive root of unity, where $v=(q^m-1)/(q-1)$. 
It is known that 
$$ 
\C_{(q,m)}^\perp=\{\bc_u: u \in \gf(q^m)\},   
$$ 
where $\bc_u=((\tr(u), \tr(u\beta), \cdots, \tr(u\beta^{v-1}))$ and $\tr(x)$ 
is the trace function from $\gf(q^m)$ to $\gf(q)$. It is then easily seen that 
$\bc_{u}$ and $\bc_{v}$ have the same support if and only if $u=av$ for some 
$a \in \gf(q)^*$. We then deduce that the total number $b^\perp$ of blocks in the design 
is given by 
$$ 
b^\perp = \frac{q^m-1}{q-1}. 
$$ 
Consequently, 
$$ 
\lambda^\perp = \frac{\frac{q^m-1}{q-1} \binom{q^{m-1}}{2}}{\binom{\frac{q^{m-1}-1}{q-1}}{2}} 
=(q-1)q^{m-2}. 
$$ 
Thus, the supports of all codewords of weight $q^{m-1}$ in 
$\C_{(q,m)}^\perp$ form a $2$-design with parameters 
$$
\left((q^m-1)/(q-1), \ q^{m-1}, \ (q-1)q^{m-2} \right). 
$$
\end{proof}

Theorem \ref{thm-HMdesign171} tells us that for some $k \geq 3$ with $A_k \neq 0$, the 
supports of the codewords with weight $k$ in $\C_{(q,m)}$ form $2$-$((q^m-1)/(q-1), k, 
\lambda)$ design. However, it looks complicated to determine the parameter $\lambda$ 
corresponding to this $k \geq 4$. We draw the reader's attention to the following open problem.  

\begin{open} 
Let $q \geq 3$ and $m \geq 2$. For $k \geq 4$ with $A_k \neq 0$, determine the value $\lambda$ in the 
$2$-$((q^m-1)/(q-1), k, \lambda)$ design, formed by the supports of the codewords with 
weight $k$ in $\C_{(q,m)}$. 
\end{open} 

Notice that two binary codewords have the same support if and only if they are equal. 
When $q=2$, Theorem \ref{thm-HMdesign171} becomes the following. 

\begin{corollary}\label{cor-HMdesign171} 
Let $m \geq 3$. Let $\cP=\{0,1,2, \cdots, 2^m-2\}$, and let $\cB$ be  the set of the supports 
of the codewords with Hamming weight $k$ in $\C_{(2,m)}$, where $3 \leq k \leq 2^m-3$. 
Then $(\cP, \cB)$ is a $2$-$(2^m-1, k, \lambda)$ design, where 
$$ 
\lambda=\frac{(k-1)kA_k}{(2^m-1)(2^m-2)}  
$$ 
and $A_k$ is given in Lemma \ref{lem-HCwt}.  

The supports of all codewords of weight $2^{m-1}$ in $\C_{(2,m)}^\perp$ form a 
$2$-$(2^m-1, 2^{m-1}, 2^{m-2})$ design. 
\end{corollary}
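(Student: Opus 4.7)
The plan is to obtain this corollary as the $q = 2$ specialization of Theorem \ref{thm-HMdesign171}. First I would verify that the hypotheses carry over: for $q = 2$ the condition $\gcd(q-1,m) = 1$ is automatic, so $\C_{(2,m)}$ is the cyclic Hamming code of length $v = 2^m - 1$ with minimum distance $d = 3$. For $q = 2$, Theorem \ref{thm-AM1} (or equivalently the bound $w$ in Theorem \ref{thm-AM2}) gives $w = v$, so the admissible range for $k$ is $3 \le k \le 2^m - 1$; we cut it back to $3 \le k \le 2^m - 3$ to satisfy the standing assumption $v > k > t$ with $t = 2$. Thus the supports of codewords of any such weight $k$ form a $2$-$(2^m-1, k, \lambda)$ design.

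To compute $\lambda$, I would use the fact peculiar to the binary alphabet that two codewords share the same support if and only if they are equal. Hence the total number of blocks of weight $k$ is exactly $b = A_k$, with $A_k$ supplied by Lemma \ref{lem-HCwt}. The standard counting identity $b \binom{k}{t} = \lambda \binom{v}{t}$ with $t = 2$ and $v = 2^m - 1$ immediately yields
\[
\lambda = \frac{k(k-1)A_k}{(2^m-1)(2^m-2)},
\]
which is the claimed expression.

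For the statement about $\C_{(2,m)}^\perp$, I would invoke the computation already used in the proof of Theorem \ref{thm-HMdesign171}: the dual is the binary simplex code, whose weight enumerator is $1 + (2^m-1)z^{2^{m-1}}$. Since distinct binary codewords have distinct supports, the number of blocks is $b^\perp = 2^m - 1$, and a second application of the $2$-design counting identity gives
\[
\lambda^\perp = \frac{(2^m-1)\binom{2^{m-1}}{2}}{\binom{2^m-1}{2}} = \frac{2^{m-1}(2^{m-1}-1)}{2^m - 2} = 2^{m-2},
\]
establishing the $2$-$(2^m-1, 2^{m-1}, 2^{m-2})$ design.

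There is no real obstacle here beyond bookkeeping: the Assmus--Mattson machinery is packaged in Theorem \ref{thm-HMdesign171}, and the only genuinely binary-specific input is the identification of blocks with codewords (which removes the factor $q-1$ that appeared in the nonbinary computation of $b$). The one point to be careful about is the upper endpoint $k = 2^m - 3$: it comes not from the Assmus--Mattson range but from the global convention $v > k$ together with $k \le v - t$, and should be flagged explicitly to avoid a seeming discrepancy with the $k \le w$ cutoff of Theorem \ref{thm-HMdesign171} in the $q = 2$ case.
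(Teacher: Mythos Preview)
Your proposal is correct and follows essentially the same approach as the paper: the paper simply notes that two binary codewords have the same support if and only if they are equal, and then says that Corollary \ref{cor-HMdesign171} is the $q=2$ case of Theorem \ref{thm-HMdesign171}. Your write-up supplies more of the bookkeeping (the automatic $\gcd$ condition, the value $w=v$, the block-counting identity), but the underlying argument is identical.

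One minor remark on your discussion of the upper endpoint: the constraint $k \le v-t$ you cite is not actually a restriction imposed by Theorem \ref{thm-AM1} on the weights $k$ in $\C$ (it appears only in the count of nonzero dual weights $s$). The range $3 \le k \le 2^m-3$ is simply what the paper states; since the all-one word lies in $\C_{(2,m)}$ one has $A_{2^m-2}=A_1=0$ and $A_{2^m-3}=A_2=0$ anyway, so nothing is lost. This is a cosmetic point and does not affect the validity of your argument.
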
 

Corollary \ref{cor-HMdesign171} says that each binary Hamming code $\C_{(2,m)}$ 
and its dual code give a total number $2^m-4$ of $2$-designs with varying block 
sizes. 

The following are examples of the $2$-designs held in the binary Hamming code. 

\begin{example} 
Let $m \geq 4$. Let $\cP=\{0,1,2, \cdots, 2^m-2\}$, and let $\cB$ be  the set of the supports 
of the codewords with Hamming weight $3$ in $\C_{(2,m)}$.  
Then $(\cP, \cB)$ is a $2$-$(2^m-1, \, 3, \, 1)$ design. 
\end{example} 

\begin{proof}
By Lemma \ref{lem-HCwt}, we have 
$$ 
A_3=\frac{(2^{m-1}-1)(2^m-1)}{3}. 
$$
The desired value for $\lambda$ then follows from Corollary \ref{cor-HMdesign171}. 
\end{proof}

\begin{example}\label{exam-hmdesign4} 
Let $m \geq 4$. Let $\cP=\{0,1,2, \cdots, 2^m-2\}$, and let $\cB$ be  the set of the supports 
of the codewords with Hamming weight $4$ in $\C_{(2,m)}$.  
Then $(\cP, \cB)$ is a $2$-$(2^m-1, \, 4, \, 2^{m-1}-2)$ design. 
\end{example} 

\begin{proof}
By Lemma \ref{lem-HCwt}, we have 
$$ 
A_4=\frac{(2^{m-1}-1)(2^{m-1}-2)(2^m-1)}{6}. 
$$
The desired value for $\lambda$ then follows from Corollary \ref{cor-HMdesign171}. 
\end{proof}

\begin{example}\label{exam-hmdesign5}  
Let $m \geq 4$. Let $\cP=\{0,1,2, \cdots, 2^m-2\}$, and let $\cB$ be  the set of the supports 
of the codewords with Hamming weight $5$ in $\C_{(2,m)}$.  
Then $(\cP, \cB)$ is a $2$-$(2^m-1, \, 5, \, \lambda)$ design, where 
$$ 
\lambda=\frac{2(2^{m-1}-2)(2^{m-1}-4)}{3} 
$$  
\end{example} 

\begin{proof}
By Lemma \ref{lem-HCwt}, we have 
$$ 
A_5=\frac{(2^{m-1}-1)(2^{m-1}-2)(2^{m-1}-4)(2^m-1)}{15}. 
$$
The desired value for $\lambda$ then follows from Corollary \ref{cor-HMdesign171}. 
\end{proof}

\begin{example}\label{exam-hmdesign6}  
Let $m \geq 4$. Let $\cP=\{0,1,2, \cdots, 2^m-2\}$, and let $\cB$ be  the set of the supports 
of the codewords with Hamming weight $6$ in $\C_{(2,m)}$.  
Then $(\cP, \cB)$ is a $2$-$(2^m-1, \, 6, \, \lambda)$ design, where 
$$ 
\lambda=\frac{(2^{m-1}-2)(2^{m-1}-3)(2^{m-1}-4)}{3} 
$$  
\end{example} 

\begin{proof}
By Lemma \ref{lem-HCwt}, we have 
$$ 
A_6=\frac{(2^{m-1}-1)(2^{m-1}-2)(2^{m-1}-3)(2^{m-1}-4)(2^m-1)}{45}. 
$$
The desired value for $\lambda$ then follows from Corollary \ref{cor-HMdesign171}. 
\end{proof}

\begin{example}\label{exam-hmdesign7}  
Let $m \geq 4$. Let $\cP=\{0,1,2, \cdots, 2^m-2\}$, and let $\cB$ be  the set of the supports 
of the codewords with Hamming weight $7$ in $\C_{(2,m)}$.  
Then $(\cP, \cB)$ is a $2$-$(2^m-1, \, 7, \, \lambda)$ design, where 
$$ 
\lambda=\frac{(2^{m-1}-2)(2^{m-1}-3)(4 \times 2^{2(m-1)}-30 \times 2^{m-1} +71)}{30}.   
$$  
\end{example} 

\begin{proof}
By Lemma \ref{lem-HCwt}, we have 
$$ 
A_7=\frac{(2^{m-1}-1)(2^{m-1}-2)(2^{m-1}-3)(2^m-1)(4 \times 2^{2(m-1)}-30 \times 2^{m-1} +71)}{630}. 
$$
The desired value for $\lambda$ then follows from Corollary \ref{cor-HMdesign171}. 
\end{proof}

\section{Designs from a class of binary codes with two zeros and their duals}\label{sec-newdesigns}

In this section, we construct many infinite families of $2$-designs and $3$-designs with 
several classes of binary cyclic codes whose duals have two zeros. These binary codes are 
defined by almost perfect nonlinear (APN) functions over $\gf(2^m)$. 

\begin{table}[ht]
\caption{Weight distribution for odd $m$.}\label{tab-CG1}
\centering
\begin{tabular}{ll}
\hline
Weight $w$    & No. of codewords $A_w$  \\ \hline
$0$                                                        & $1$ \\
$2^{m-1}-2^{(m-1)/2}$           & $(2^m-1)(2^{(m-1)/2}+1)2^{(m-3)/2}$ \\
$2^{m-1}$                             & $(2^m-1)(2^{m-1}+1)$ \\
$2^{m-1}+2^{(m-1)/2}$           & $(2^m-1)(2^{(m-1)/2}-1)2^{(m-3)/2}$ \\ \hline
\end{tabular}
\end{table}

\begin{lemma}\label{lem-TZwt}
Let $m \geq 5$ be odd. 
Let $\C_m$ be a binary linear code of length $2^m-1$ such that its dual code
$\C_m^\perp$ has the weight distribution of Table \ref{tab-CG1}. Then the weight distribution
of $\C_m$ is given by
\begin{eqnarray*}
2^{2m}A_k&=& \sum_{\substack{0 \le i \le 2^{m-1}-2^{(m-1)/2} \\
0\le j \le 2^{m-1}+2^{(m-1)/2}-1 \\i+j=k}}(-1)^ia\binom{2^{m-1}-2^{(m-1)/2}} {i} \binom{2^{m-1}+2^{(m-1)/2}-1}{j}\\
& & + \binom {2^m-1}{k}+\sum_{\substack{0 \le i \le 2^{m-1} \\
0\le j \le 2^{m-1}-1 \\i+j=k}}(-1)^ib\binom{2^{m-1}} {i}\binom{2^{m-1}-1} {j} \\
& & + \sum_{\substack{0 \le i \le 2^{m-1}+2^{(m-1)/2} \\
0\le j \le 2^{m-1}-2^{(m-1)/2}-1 \\i+j=k}}(-1)^ic\binom{2^{m-1}+2^{(m-1)/2}}{i}\binom{2^{m-1}-2^{(m-1)/2}-1}{j} 
\end{eqnarray*}
for $0 \le k \le 2^m-1$, where 
\begin{eqnarray*}
a &=& (2^m-1)(2^{(m-1)/2}+1)2^{(m-3)/2}, \\   
b &=& (2^m-1)(2^{m-1}+1), \\  
c &=& (2^m-1)(2^{(m-1)/2}-1)2^{(m-3)/2}.
\end{eqnarray*} 
In addition, $\C_m$ has parameters $[2^m-1, 2^m-1-2m, 5]$. 
\end{lemma}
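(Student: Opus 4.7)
The plan is to invert the MacWilliams transform. Since the weight enumerator $A^\perp(z)$ of $\C_m^\perp$ is completely specified by Table \ref{tab-CG1}, applying Theorem \ref{thm-MI} to the code $\C_m^\perp$ (whose dual is $\C_m$) will express $A(z)$ as an explicit polynomial, and reading off the coefficient of $z^k$ will give the formula for $A_k$ in the statement.

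First I would record
$$
A^\perp(z) = 1 + a\,z^{w_1} + b\,z^{w_2} + c\,z^{w_3},
$$
where $w_1 = 2^{m-1} - 2^{(m-1)/2}$, $w_2 = 2^{m-1}$, $w_3 = 2^{m-1} + 2^{(m-1)/2}$ and $a,b,c$ are as defined in the statement. Summing $1+a+b+c$ and simplifying gives $2^{2m}$, so $\dim \C_m^\perp = 2m$ and hence $\dim \C_m = 2^m - 1 - 2m$. This also pins down the exponent $\kappa^\perp = 2m$ that appears in the MacWilliams formula applied to $\C_m^\perp$.

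Next, with $q=2$ and $v = 2^m - 1$, Theorem \ref{thm-MI} applied to $\C_m^\perp$ gives
$$
A(z) = 2^{-2m}(1+z)^{2^m-1}\,A^\perp\!\left(\frac{1-z}{1+z}\right).
$$
Substituting and clearing denominators yields
$$
2^{2m}A(z) = (1+z)^{2^m-1} + a\,(1-z)^{w_1}(1+z)^{2^m-1-w_1} + b\,(1-z)^{w_2}(1+z)^{2^m-1-w_2} + c\,(1-z)^{w_3}(1+z)^{2^m-1-w_3},
$$
where $2^m-1-w_1 = 2^{m-1}+2^{(m-1)/2}-1$, $2^m-1-w_2 = 2^{m-1}-1$, and $2^m-1-w_3 = 2^{m-1}-2^{(m-1)/2}-1$. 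Expanding each product $(1-z)^p(1+z)^q$ by the binomial theorem as $\sum_{i+j=k}(-1)^i\binom{p}{i}\binom{q}{j}z^k$ and collecting the coefficient of $z^k$ reproduces precisely the three indexed sums in the statement, with $\binom{2^m-1}{k}$ coming from the first term.

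The final item is the minimum distance $d = 5$. The cleanest way is to note that the three weights of $\C_m^\perp$ in Table \ref{tab-CG1} match the standard weight distribution of the dual of a binary cyclic code of length $2^m-1$ defined by an APN exponent, which forces $\C_m$ to have $A_1 = A_2 = A_3 = A_4 = 0$ and $A_5 > 0$; alternatively one can substitute $k = 1, 2, 3, 4$ into the coefficient formula just derived and check that the resulting alternating binomial sums vanish, while $k=5$ does not. I expect this last verification to be the main obstacle: the vanishing of $A_k$ for small $k$ is a delicate cancellation among four binomial sums of different lengths, and unlike the main computation it does not reduce to a single binomial expansion, so unless one appeals to the APN origin of $\C_m$ one has to argue it via the first four Pless power moments (equivalently, $A_i = 0$ for $1 \le i \le 4$ is equivalent to $\C_m^\perp$ being a projective two-weight–minus–mean structure satisfying four linear constraints, which the weights $w_1, w_2, w_3$ of Table \ref{tab-CG1} do satisfy).
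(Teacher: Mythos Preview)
Your proposal is correct and follows essentially the same route as the paper: apply the MacWilliams identity to $\C_m^\perp$, expand each $(1-z)^p(1+z)^q$ by the binomial theorem, and read off coefficients. For the minimum distance, the paper does exactly what you list as the second alternative---it substitutes $k=1,2,3,4,5$ into the derived formula, verifies $A_1=A_2=A_3=A_4=0$ by ``tedious computations,'' obtains the closed form $A_5=(4\cdot 2^{3m-5}-22\cdot 2^{2m-4}+26\cdot 2^{m-3}-2)/15$, and checks $A_5>0$ for $m\ge 5$. Your ``cleanest way'' via the APN connection is not quite a self-contained argument for the lemma as stated (which assumes only the dual weight distribution, not a particular construction), though it could be made rigorous by noting that MacWilliams determines $A_k$ uniquely from $A^\perp$, so exhibiting one APN instance with $d=5$ would suffice; the paper avoids this indirection and just computes.
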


\begin{proof}
By assumption, the weight enumerator of $\C_m^\perp$ 
is given by 
$$
A^\perp(z)=1+az^{2^{m-1}-2^{(m-1)/2}}+bz^{2^{m-1}}+cz^{2^{m-1}+2^{(m-1)/2}}.
$$
It then follows from Theorem \ref{thm-MI} that the weight enumerator of $\C_m$ is given by
\begin{eqnarray*}
A(z) &=& \frac{1}{2^{2m}} (1+z)^{2^m-1}\left[ 1 +
             a\left(\frac {1-z} {1+z}\right)^{2^{m-1}-2^{(m-1)/2}} \right] + \\
& & \frac{1}{2^{2m}} (1+z)^{2^m-1}\left[
             b\left(\frac {1-z} {1+z}\right)^{2^{m-1}}+c\left(\frac {1-z} {1+z}\right)^{2^{m-1}+2^{(m-1)/2}} \right] \\             
&=& \frac{1}{2^{2m}} \Bigg[ (1+z)^{2^m-1} +
             a(1-z)^{2^{m-1}-2^{(m-1)/2}}(1+z)^{2^{m-1}+2^{(m-1)/2}-1} \\ 
             & & + b(1-z)^{2^{m-1}}(1+z)^{2^{m-1}-1} + 
             c(1-z)^{2^{m-1}+2^{(m-1)/2}}(1+z)^{2^{m-1}-2^{(m-1)/2}-1} \Bigg].
\end{eqnarray*}
Obviously, we have   
\begin{eqnarray*}
(1+z)^{2^m-1} &=& \sum_{k=0}^{2^m-1} \binom{2^m-1}{k}z^k. 
\end{eqnarray*}  
It is easily seen that 
\begin{eqnarray*}
\lefteqn{(1-z)^{2^{m-1}-2^{(m-1)/2}}(1+z)^{2^{m-1}+2^{(m-1)/2}-1} } \\  
&=&  \sum_{k=0}^{2^m-1} \left[ \sum_{\substack{0 \le i \le 2^{m-1}-2^{(m-1)/2} \\
0\le j \le 2^{m-1}+2^{(m-1)/2}-1 \\i+j=k}}(-1)^i  \binom{2^{m-1}-2^{(m-1)/2}} {i} \binom{2^{m-1}+2^{(m-1)/2}-1}{j} \right] z^k  
\end{eqnarray*} 
and 
\begin{eqnarray*}
\lefteqn{(1-z)^{2^{m-1}+2^{(m-1)/2}}(1+z)^{2^{m-1}-2^{(m-1)/2}-1} } \\
&=&  \sum_{k=0}^{2^m-1}  \left[ \sum_{\substack{0 \le i \le 2^{m-1}+2^{(m-1)/2} \\
0\le j \le 2^{m-1}-2^{(m-1)/2}-1 \\i+j=k}}(-1)^i \binom{2^{m-1}+2^{(m-1)/2}}{i}\binom{2^{m-1}-2^{(m-1)/2}-1}{j} \right] z^k. 
\end{eqnarray*}
Similarly, we have 
\begin{eqnarray*}
(1-z)^{2^{m-1}}(1+z)^{2^{m-1}-1}=
  \sum_{k=0}^{2^m-1} \left[ \sum_{\substack{0 \le i \le 2^{m-1} \\
0\le j \le 2^{m-1}-1 \\i+j=k}}(-1)^i \binom{2^{m-1}} {i}\binom{2^{m-1}-1} {j}  \right] z^k. 
\end{eqnarray*}
Combining these formulas above yields the weight distribution formula for $A_k$. 

The weight distribution in Table \ref{tab-CG1} tells us that the dimension of $\C_m^\perp$ 
is $2m$. Therefore, the dimension of $\C_m$ is equal to $2^m-1-2m$. Finally, we 
prove that the minimum distance $d$ of $\C_m$ equals $5$. 

After tedious computations with the formula of $A_k$ given in Lemma \ref{lem-TZwt}, 
one can verify that $A_1=A_2=A_3=A_4=0$ and  
\begin{eqnarray}\label{eqn-minimumwt5}
A_5=\frac{4\times 2^{3m-5} - 22 \times 2^{2m-4} + 26 \times 2^{m-3} -2}{15}. 
\end{eqnarray} 
When $m \geq 5$, we have  
$$
4\times 2^{3m-5} = 4 \times 2^{m-1} 2^{2m-4} \geq 64 \times 2^{2m-4} > 22 \times 2^{2m-4}
$$ 
and 
$$ 
26 \times 2^{m-3} -2 >0. 
$$
Consequently, $A_5>0$ for all odd $m$. This proves that $d=5$.

\end{proof}

\begin{theorem}\label{thm-newdesigns1} 
Let $m \geq 5$ be odd. 
Let $\C_m$ be a binary linear code of length $2^m-1$ such that its dual code
$\C_m^\perp$ has the weight distribution of Table \ref{tab-CG1}. 
Let $\cP=\{0,1,2, \cdots, 2^m-2\}$, and let $\cB$ be  the set of the supports of the codewords of $\C_m$
with weight $k$, where $A_k \neq 0$. Then $(\cP, \cB)$
is a $2$-$(2^m-1, k, \lambda)$ design, where
\begin{eqnarray*}
\lambda=\frac{k(k-1)A_k}{(2^m-1)(2^m-2)}, 
\end{eqnarray*} 
where $A_k$ is given in Lemma \ref{lem-TZwt}. 

Let $\cP=\{0,1,2, \cdots, 2^m-2\}$, and let $\cB^\perp$ be  the set of  the supports of the codewords of $\C_m^\perp$
with weight $k$ and $A_k^\perp \neq 0$. Then $(\cP, \cB^\perp)$
is a $2$-$(2^m-1, k, \lambda)$ design, where
\begin{eqnarray*}
\lambda=\frac{k(k-1)A_k^\perp}{(2^m-1)(2^m-2)}, 
\end{eqnarray*}
where $A_k^\perp$ is given in Lemma \ref{lem-TZwt}. 
\end{theorem}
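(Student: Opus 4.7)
The plan is to invoke the Assmus--Mattson Theorem (Theorem \ref{thm-AM1}) directly, using the information already gathered in Lemma \ref{lem-TZwt} and Table \ref{tab-CG1}. First I would record the relevant parameters: by Lemma \ref{lem-TZwt}, $\C_m$ has minimum distance $d=5$, while by Table \ref{tab-CG1} the dual $\C_m^\perp$ has exactly three nonzero weights, namely $w_1=2^{m-1}-2^{(m-1)/2}$, $w_2=2^{m-1}$, and $w_3=2^{m-1}+2^{(m-1)/2}$.

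Next, taking $t=2$, I would verify the hypothesis $s\le d-t$ of Theorem \ref{thm-AM1}. Since $m\ge 5$ is odd, one checks directly that $w_1,w_2,w_3 \le 2^m-3 = v-t$; hence the count $s$ of nonzero weights $A_i^\perp$ with $0<i\le v-t$ equals $3$, and $d-t=5-2=3$, so $s\le d-t$. Theorem \ref{thm-AM1} then yields that, for every $k$ with $A_k\ne 0$ and $5\le k\le 2^m-1$, the supports of the weight-$k$ codewords of $\C_m$ form a $2$-$(2^m-1,k,\lambda)$ design, and likewise that for every $k$ with $A_k^\perp\ne 0$ and $w_1\le k\le 2^m-1$, the supports of the weight-$k$ codewords of $\C_m^\perp$ form a $2$-$(2^m-1,k,\lambda)$ design.

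To pin down $\lambda$ I would use the standard block count. Since the code is binary, two codewords have the same support if and only if they are equal, so the block set $\cB$ (respectively $\cB^\perp$) is in bijection with the set of weight-$k$ codewords. Hence the number of blocks is $b=A_k$ (respectively $b^\perp=A_k^\perp$). The defining identity of a $2$-design,
\[
\lambda \binom{v}{2} = b \binom{k}{2},
\]
applied with $v=2^m-1$ gives $\lambda=\frac{k(k-1)A_k}{(2^m-1)(2^m-2)}$ and the analogous expression for $\C_m^\perp$, where $A_k$ and $A_k^\perp$ are the quantities furnished by Lemma \ref{lem-TZwt} and Table \ref{tab-CG1}, respectively.

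There is no real obstacle here: the only thing one must be careful about is verifying the Assmus--Mattson inequality $s\le d-t$, which is tight ($s=d-t=3$), and checking that all three nonzero dual weights lie in the admissible range $(0,v-t]$; both are routine in the regime $m\ge 5$. Everything else is bookkeeping against results already established earlier in the paper.
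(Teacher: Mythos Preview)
Your proposal is correct and follows essentially the same approach as the paper: invoke Lemma~\ref{lem-TZwt} for $d=5$, take $t=2$, count $s=3$ nonzero dual weights in the range $(0,\,v-t]$ from Table~\ref{tab-CG1}, apply Theorem~\ref{thm-AM1}, and then use the fact that distinct binary codewords have distinct supports to identify $b=A_k$ and compute $\lambda$ from the standard identity $\lambda\binom{v}{2}=b\binom{k}{2}$. The paper's proof is the same argument, only stated more tersely.
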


\begin{proof}
The weight distribution of $\C_m$ is given in Lemma \ref{lem-TZwt} and that of $\C_m^\perp$ 
is given in Table \ref{tab-CG1}. By Lemma \ref{lem-TZwt}, the minimum distance $d$ of $\C_m$ 
is equal to $5$. Put $t=2$. The number of $i$ with $A_i^\perp \neq 0$ and $1 \leq i \leq 2^m-1 -t$ 
is $s=3$. Hence, $s=d-t$. The desired conclusions then follow from Theorem \ref{thm-AM1} and 
the fact that two binary vectors have the same support if and only if they are equal.  
\end{proof}

\begin{example} 
Let $m \geq 5$ be odd. 
Then $\C_m^\perp$ gives three $2$-designs with the following parameters: 
\begin{itemize}
\item $(v,\, k, \, \lambda)=\left(2^m-1,\  2^{m-1}-2^{(m-1)/2}, \  2^{m-3} (2^{m-1} - 2^{(m-1)/2} -1)  \right).$
\item $(v, \, k, \, \lambda)=\left(2^m-1, \ 2^{m-1}+2^{(m-1)/2}, \  2^{m-3} (2^{m-1} + 2^{(m-1)/2} -1)  \right).$ 
\item $(v, \, k, \, \lambda)=\left(2^m-1, \ 2^{m-1}, \  (2^m-1)(2^{m-1}+1) \right).$  
\end{itemize} 
\end{example}

\begin{example} 
Let $m \geq 5$ be odd. 
Then the supports of all codewords of weight $5$ in $\C_m$ give a $2$-$(2^m-1,\, 5,\, (2^{m-1}-4)/3)$ design. 
\end{example} 

\begin{proof}
By Lemma  \ref{lem-TZwt}, 
$$ 
A_5 = \frac{(2^{m-1}-1) (2^{m-1}-4) (2^m-1)}{30}
$$
The desired value for $\lambda$ then follows from Theorem \ref{thm-newdesigns1}. 
\end{proof}

\begin{example} 
Let $m \geq 5$ be odd. 
Then the supports of all codewords of weight $6$ in $\C_m$ give a $2$-$(2^m-1,\, 6,\, \lambda)$ design, where 
$$ 
\lambda= \frac{(2^{m-2}-2)(2^{m-1}-3)}{3}. 
$$
\end{example}

\begin{proof}
By Lemma  \ref{lem-TZwt}, 
$$ 
A_6 = \frac{(2^{m-1}-1) (2^{m-1}-4) (2^{m-1}-3)  (2^m-1)}{90}
$$
The desired value for $\lambda$ then follows from Theorem \ref{thm-newdesigns1}. 
\end{proof}

\begin{example} 
Let $m \geq 5$ be odd. 
Then the supports of all codewords of weight $7$ in $\C_m$ give a $2$-$(2^m-1,\, 7,\, \lambda)$ design, where 
$$ 
\lambda= \frac{2 \times 2^{3(m-1)} - 25 \times 2^{2(m-1)} + 123 \times 2^{m-1} - 190}{30}. 
$$
\end{example} 

\begin{proof}
By Lemma  \ref{lem-TZwt}, 
$$ 
A_7 = \frac{(2^{m-1}-1) (2^m-1) (2 \times 2^{3(m-1)} - 25 \times 2^{2(m-1)} + 123 \times 2^{m-1} - 190)}{630}. 
$$
The desired value for $\lambda$ then follows from Theorem \ref{thm-newdesigns1}. 
\end{proof}

\begin{example} 
Let $m \geq 5$ be odd. 
Then the supports of all codewords of weight $8$ in $\C_m$ give a $2$-$(2^m-1,\, 8,\, \lambda)$ design, where 
$$ 
\lambda= \frac{ (2^{m-2}-2)  (2 \times 2^{3(m-1)} - 25 \times 2^{2(m-1)} + 123 \times 2^{m-1} - 190)}{45}. 
$$
\end{example} 

\begin{proof}
By Lemma  \ref{lem-TZwt}, 
$$ 
A_8 =  \frac{(2^{m-1}-1) (2^{m-1}-4)  (2^m-1) (2 \times 2^{3(m-1)} - 25 \times 2^{2(m-1)} + 123 \times 2^{m-1} - 190)}{8 \times 315}. 
$$
The desired value for $\lambda$ then follows from Theorem \ref{thm-newdesigns1}. 
\end{proof}

\begin{lemma}\label{lem-TZEwt} 
Let $m \geq 5$ be odd. Let $\C_m$ be a linear code of length $2^m-1$ such that its dual code
$\C_m^\perp$ has the weight distribution of Table \ref{tab-CG1}. Denote by $\overline{\C}_m$ 
the extended code of $\C_m$ and let $\overline{\C}_m^\perp$ denote the dual of $\overline{\C}_m$.
Then the weight distribution
of $\overline{\C}_m$ is given by 
\begin{eqnarray*}
2^{2m+1}\overline{A}_k&=&  (1+(-1)^k) \binom{2^m}{k} + 
\frac{1+(-1)^k}{2} (-1)^{\lfloor k/2 \rfloor} \binom{2^{m-1}}{\lfloor k/2 \rfloor} v  + \\ 
& & u \sum_{\substack{0 \le i \le 2^{m-1}-2^{(m-1)/2} \\
0\le j \le 2^{m-1}+2^{(m-1)/2} \\i+j=k}}(-1)^i \binom{2^{m-1}-2^{(m-1)/2}} {i} \binom{2^{m-1}+2^{(m-1)/2}}{j} + \\
& & u  
\sum_{\substack{0 \le i \le 2^{m-1}+2^{(m-1)/2} \\
0\le j \le 2^{m-1}-2^{(m-1)/2} \\i+j=k}}(-1)^i \binom{2^{m-1}+2^{(m-1)/2}}{i}\binom{2^{m-1}-2^{(m-1)/2}}{j} 
\end{eqnarray*}
for $0 \le k \le 2^m$, where 
$$ 
u=2^{2m-1}-2^{m-1} \mbox{ and } v = 2^{2m}+2^m-2. 
$$
In addition, $\overline{\C}_m$ has parameters $[2^m, 2^m-1-2m, 6]$. 

The code $\overline{\C}_m^\perp$ has weight enumerator 
\begin{eqnarray}\label{eqn-wtenumerator}
\overline{A}^\perp(z) = 1+uz^{2^{m-1}-2^{(m-1)/2}}+vz^{2^{m-1}}+uz^{2^{m-1}+2^{(m-1)/2}}+z^{2^m}, 
\end{eqnarray} 
and parameters $[2^m, \ 2m+1, \ 2^{m-1}-2^{(m-1)/2}]$. 

\end{lemma}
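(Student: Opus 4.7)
The plan is to first describe $\overline{\C}_m^\perp$ directly in terms of $\C_m^\perp$, read off its weight enumerator \eqref{eqn-wtenumerator} from Table~\ref{tab-CG1}, and then apply the MacWilliams identity (Theorem~\ref{thm-MI}) to obtain the weight enumerator of $\overline{\C}_m$. The parameters of both codes then fall out as byproducts.

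For the first step, I would use the standard description of the dual of an extended binary code: if $H$ is a parity-check matrix of $\C_m$, then a parity-check matrix of $\overline{\C}_m$ is obtained by appending a zero column to $H$ and adjoining an all-ones row $(\mathbf{1},1)$. This gives
$$
\overline{\C}_m^\perp \;=\; \{(c,0):c\in \C_m^\perp\} \;\cup\; \{(\mathbf{1}+c,1):c\in \C_m^\perp\}.
$$
Since every nonzero weight in Table~\ref{tab-CG1} is at most $2^{m-1}+2^{(m-1)/2}<2^m-1$ for $m\ge 5$, the all-ones word of length $2^m-1$ does not lie in $\C_m^\perp$, so the union is disjoint and $\dim\overline{\C}_m^\perp = 2m+1$. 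A word $(c,0)$ has the same weight as $c$, while $(\mathbf{1}+c,1)$ has weight $2^m-\wt(c)$. Pairing weights $w$ and $2^m-w$ in Table~\ref{tab-CG1} therefore collapses the contributions: the two off-centre classes merge at each of the weights $2^{m-1}\pm 2^{(m-1)/2}$ to give $a+c=(2^m-1)2^{m-1}=u$; the central weight $2^{m-1}$ acquires $2b=(2^m-1)(2^m+2)=v$; and weights $0$ and $2^m$ each carry a single codeword. This yields \eqref{eqn-wtenumerator} as well as the claimed parameters $[2^m,\,2m+1,\,2^{m-1}-2^{(m-1)/2}]$ of $\overline{\C}_m^\perp$.

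Next I would feed \eqref{eqn-wtenumerator} into Theorem~\ref{thm-MI} with $q=2$ and $\kappa=2m+1$, giving
$$
\overline{A}(z)\;=\;2^{-(2m+1)}(1+z)^{2^m}\,\overline{A}^\perp\!\Bigl(\tfrac{1-z}{1+z}\Bigr).
$$
The extreme terms combine to $(1+z)^{2^m}+(1-z)^{2^m}=\sum_k(1+(-1)^k)\binom{2^m}{k}z^k$, producing the first summand of the claimed formula. The central term gives $v(1-z^2)^{2^{m-1}}$, which expands as $v\sum_k \frac{1+(-1)^k}{2}(-1)^{\lfloor k/2\rfloor}\binom{2^{m-1}}{\lfloor k/2\rfloor}z^k$. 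The remaining two $u$-terms become the mirror-image products $u(1-z)^{2^{m-1}-2^{(m-1)/2}}(1+z)^{2^{m-1}+2^{(m-1)/2}}$ and the swap obtained by exchanging the exponents; expanding each as a Cauchy product of two binomial series and collecting the coefficient of $z^k$ produces exactly the two double sums in the statement.

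Finally, for the parameters of $\overline{\C}_m$ itself: the length is $2^m$, the dimension is $2^m-(2m+1)=2^m-1-2m$, and since $\C_m$ has odd minimum distance $5$ by Lemma~\ref{lem-TZwt}, the parity extension lifts the minimum distance to $6$. The only real obstacle is bookkeeping, namely matching up the two Cauchy-product expansions with the stated double sums and keeping the binomial indices straight; no genuine cancellations or sign subtleties arise once the weight pairing in $\overline{\C}_m^\perp$ has been established in the first step.
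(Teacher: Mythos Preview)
Your argument is correct and in fact cleaner than the paper's at the key step. Both proofs end with the same MacWilliams expansion, but they differ in how they obtain the weight enumerator~\eqref{eqn-wtenumerator} of $\overline{\C}_m^\perp$. The paper first argues (without details) that the weights of $\overline{\C}_m^\perp$ lie in $\{0,w_1,w_2,w_3,2^m-w_1,2^m-w_2,2^m-w_3,2^m\}$, observes this collapses to four nonzero weights, and then appeals to the first few Pless power moments---using the already-established minimum distance $6$ of $\overline{\C}_m$---to solve for the multiplicities $u,u,v,1$. Your approach bypasses the Pless moments entirely: the coset decomposition $\overline{\C}_m^\perp=\{(c,0)\}\cup\{(\mathbf{1}+c,1)\}$ gives the multiplicities $a+c=u$ and $2b=v$ by direct bookkeeping from Table~\ref{tab-CG1}. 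This is more elementary and self-contained; in particular, it does not rely on knowing $d(\overline{\C}_m)=6$ in advance, whereas the paper's Pless-moment computation does. One small remark: your sentence about the all-ones word of length $2^m-1$ not lying in $\C_m^\perp$ is unnecessary for disjointness of the union (the last coordinate already separates the two cosets), though it does no harm.
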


\begin{proof} 
It was proved in Lemma \ref{lem-TZwt} that $\C_m$ has parameters $[2^m-1, 2^m-1-2m, 5]$. 
By definition, the extended code $\overline{\C}_m$ has parameters $[2^m, 2^m-1-2m, 6]$. 
By Table \ref{tab-CG1},  all weights of $\C_m^\perp$ are even. Note that $\C_m^\perp$ 
has length $2^m-1$ and dimension $2m$, while $\overline{\C}_m^\perp$ has length $2^m$ and dimension 
$2m+1$. By definition, $\overline{\C}_m$ has only even weights. Therefore, the all-one vector must 
be a codeword in $\overline{\C}_m^\perp$. It can be shown that the  weights in  $\overline{\C}_m^\perp$ 
are the following: 
$$ 
0, \ w_1,\  w_2, \ w_3,\  2^m-w_1,\  2^m-w_2, \  2^m-w_3, \ 2^m, 
$$
where $w_1, w_2$ and $w_3$ are the three nonzero weights in $\C_m^\perp$. 
Consequently,  $\overline{\C}_m^\perp$ has the following four weights 
$$ 
2^{m-1}-2^{(m-1)/2}, \ 2^{m-1}, \ 2^{m-1}+2^{(m-1)/2}, \ 2^m.  
$$ 
Recall that $\overline{\C}_m$ has minimum distance $6$. Employing 
the first few Pless Moments, one can prove that the weight enumerator of $\overline{\C}_m^\perp$ 
is the one given in (\ref{eqn-wtenumerator}).  

By Theorem \ref{thm-MI}, the weight enumerator of $\overline{\C}_m$ is given by
\begin{eqnarray}\label{eqn-j18-1}
2^{2m+1}\overline{A}(z) &=&  (1+z)^{2^m}\left[ 1 +
             u\left(\frac {1-z} {1+z}\right)^{2^{m-1}-2^{(m-1)/2}}+
             v\left(\frac {1-z} {1+z}\right)^{2^{m-1}} \right] +  \nonumber \\
      &&  (1+z)^{2^m}\left[ u\left(\frac {1-z} {1+z}\right)^{2^{m-1}+2^{(m-1)/2}} + 
      \left(\frac{1-z}{1+z}\right)^{2^m} \right] \nonumber \\
&=&  (1+z)^{2^m} + (1-z)^{2^m} + v (1-z^2)^{2^{m-1}} + \nonumber \\ 
 & &             u(1-z)^{2^{m-1}-2^{(m-1)/2}}(1+z)^{2^{m-1}+2^{(m-1)/2}} + \nonumber \\ 
             & &  
             u(1-z)^{2^{m-1}+2^{(m-1)/2}}(1+z)^{2^{m-1}-2^{(m-1)/2}} .
\end{eqnarray} 

We now treat the terms in (\ref{eqn-j18-1}) one by one. We first have 
\begin{eqnarray}\label{eqn-j18-2}
(1+z)^{2^m} + (1-z)^{2^m} = \sum_{k=0}^{2^m} \left(1+(-1)^k \right) \binom{2^m}{k}.  
\end{eqnarray} 
One can easily see that 
\begin{eqnarray}\label{eqn-j18-3}
(1-z^2)^{2^{m-1}} = \sum_{i=0}^{2^{m-1}} (-1)^i \binom{2^{m-1}}{i} z^{2i} = 
\sum_{k=0}^{2^{m}} \frac{1+(-1)^k}{2} (-1)^{\lfloor k/2 \rfloor} \binom{2^{m-1}}{\lfloor k/2 \rfloor} z^{k}.  
\end{eqnarray} 

Notice that 
\begin{eqnarray*}
(1-z)^{2^{m-1}-2^{(m-1)/2}}=\sum_{i=0}^{2^{m-1}-2^{(m-1)/2}} \binom{2^{m-1}-2^{(m-1)/2}}{i} (-1)^i z^i   
\end{eqnarray*} 
and 
\begin{eqnarray*}
(1+z)^{2^{m-1}+2^{(m-1)/2}}=\sum_{i=0}^{2^{m-1}+2^{(m-1)/2}} \binom{2^{m-1}+2^{(m-1)/2}}{i} z^i   
\end{eqnarray*} 
We have then 
\begin{eqnarray}\label{eqn-j18-4}
\lefteqn{(1-z)^{2^{m-1}-2^{(m-1)/2}} (1+z)^{2^{m-1}+2^{(m-1)/2}} } \nonumber \\
& & = \sum_{k=0}^{2^m} 
\left[ \sum_{\substack{0 \le i \le 2^{m-1}-2^{(m-1)/2} \\
0\le j \le 2^{m-1}+2^{(m-1)/2} \\i+j=k}}(-1)^i \binom{2^{m-1}-2^{(m-1)/2}} {i} \binom{2^{m-1}+2^{(m-1)/2}}{j}     \right] z^k.  
\end{eqnarray}
Similarly, we have 
\begin{eqnarray}\label{eqn-j18-5}
\lefteqn{(1-z)^{2^{m-1}+2^{(m-1)/2}} (1+z)^{2^{m-1}-2^{(m-1)/2}} } \nonumber \\
& & = \sum_{k=0}^{2^m} 
\left[ \sum_{\substack{0 \le i \le 2^{m-1}+2^{(m-1)/2} \\
0\le j \le 2^{m-1}-2^{(m-1)/2} \\i+j=k}}(-1)^i \binom{2^{m-1}+2^{(m-1)/2}} {i} \binom{2^{m-1}-2^{(m-1)/2}}{j}     \right] z^k.  
\end{eqnarray}

Plugging (\ref{eqn-j18-2}), (\ref{eqn-j18-3}), (\ref{eqn-j18-4}), and (\ref{eqn-j18-5}) into 
(\ref{eqn-j18-1}) proves the desired conclusion.  
\end{proof}

\begin{theorem}\label{thm-newdesigns2}
Let $m \geq 5$ be odd. 
Let $\C_m$ be a linear code of length $2^m-1$ such that its dual code
$\C_m^\perp$ has the weight distribution of Table \ref{tab-CG1}. Denote by $\overline{\C}_m$ 
the extended code of $\C_m$ and let $\overline{\C}_m^\perp$ denote the dual of $\overline{\C}_m$.
Let $\cP=\{0,1,2, \cdots, 2^m-1\}$, and let $\overline{\cB}$ be  the set of  the supports of the codewords of $\overline{\C}_m$
with weight $k$, where $\overline{A}_k \neq 0$. Then $(\cP, \overline{\cB})$
is a $3$-$(2^m, k, \lambda)$ design, where
\begin{eqnarray*}
\lambda=\frac{\overline{A}_k\binom{k}{3}}{\binom{2^m}{3}}, 
\end{eqnarray*}
where $\overline{A}_k$ is given in Lemma \ref{lem-TZEwt}. 

Let $\cP=\{0,1,2, \cdots, 2^m-1\}$, and let $\overline{\cB}^\perp$ be  the set of  the supports of the codewords of
$\overline{\C}_m^\perp$
with weight $k$ and $\overline{A}_k^\perp \neq 0$. Then $(\cP, \overline{\cB}^\perp)$
is a $3$-$(2^m, k, \lambda)$ design, where
\begin{eqnarray*}
\lambda=\frac{\overline{A}_k^\perp\binom{k}{3}}{\binom{2^m}{3}}, 
\end{eqnarray*}
where $\overline{A}_k^\perp$ is given in Lemma \ref{lem-TZEwt}. 
\end{theorem}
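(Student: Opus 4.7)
The plan is to apply the binary Assmus--Mattson theorem (Theorem~\ref{thm-AM1}) to the pair $(\overline{\C}_m, \overline{\C}_m^\perp)$ with the choice $t = 3$. All the ingredients needed to verify the hypotheses have already been assembled in Lemma~\ref{lem-TZEwt}: we know the parameters $[2^m,\,2^m-1-2m,\,6]$ of $\overline{\C}_m$ and the explicit weight enumerator \eqref{eqn-wtenumerator} of $\overline{\C}_m^\perp$, which has only four nonzero weight values.

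First I would check the inequality $t < d$: since $d = 6 > 3$, this holds. Next I would count $s$, the number of $i$ with $\overline{A}_i^\perp \ne 0$ lying in the range $1 \le i \le v - t = 2^m - 3$. The nonzero weights of $\overline{\C}_m^\perp$ are
\[
2^{m-1}-2^{(m-1)/2},\quad 2^{m-1},\quad 2^{m-1}+2^{(m-1)/2},\quad 2^m.
\]
For odd $m \ge 5$ one checks the trivial inequality $2^{m-1}+2^{(m-1)/2} \le 2^m - 3$, so exactly the first three of these weights fall in the range $1 \le i \le 2^m - 3$, while $2^m$ is excluded. Hence $s = 3 = d - t$, and Theorem~\ref{thm-AM1} applies: for every $k$ with $\overline{A}_k \ne 0$ and $6 \le k \le 2^m$ the supports of the weight-$k$ codewords of $\overline{\C}_m$ form a $3$-$(2^m,k,\lambda)$ design, and the analogous conclusion holds for $\overline{\C}_m^\perp$.

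It remains to compute $\lambda$. Because the codes are binary, two codewords share a support if and only if they are equal, so the number of blocks equals exactly $\overline{A}_k$ in the first case and $\overline{A}_k^\perp$ in the second. The classical counting identity $\lambda\binom{v}{t} = b\binom{k}{t}$ with $v=2^m$ and $t=3$ then yields
\[
\lambda = \frac{\overline{A}_k\binom{k}{3}}{\binom{2^m}{3}} \qquad \text{and} \qquad \lambda = \frac{\overline{A}_k^\perp\binom{k}{3}}{\binom{2^m}{3}},
\]
as claimed.

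There is no genuine obstacle here; the proof is essentially a direct invocation of the Assmus--Mattson machinery combined with the weight-enumerator data already proved in Lemma~\ref{lem-TZEwt}. The only point that requires a moment's care is verifying $s = d - t$ by checking that the weight $2^m$ is the unique nonzero weight of $\overline{\C}_m^\perp$ exceeding $2^m - 3$; this is immediate from the explicit four-term enumerator \eqref{eqn-wtenumerator} and the bound $2^{(m-1)/2} \le 2^{m-1} - 3$ valid for $m \ge 5$.
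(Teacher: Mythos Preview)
Your proof is correct and follows essentially the same approach as the paper: apply the binary Assmus--Mattson theorem (Theorem~\ref{thm-AM1}) with $t=3$, using the parameters and weight enumerator from Lemma~\ref{lem-TZEwt} to verify $s=3=d-t$, and then invoke the fact that distinct binary codewords have distinct supports to compute $\lambda$. You give a bit more detail in checking which dual weights fall in the range $1\le i\le 2^m-3$, but the argument is otherwise identical.
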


\begin{proof}
The weight distributions of $\overline{\C}_m$ and $\overline{\C}_m^\perp$ are described in 
Lemma \ref{lem-TZEwt}. 
Notice that the minimum distance $d$ of $\overline{\C}_m$ is equal to $6$. Put $t=3$. The number
of $i$ with $\overline{A}_i^\perp \neq 0$ and $1 \leq i \leq 2^m -t$ is $s=3$. Hence, $s=d-t$.
The desired conclusions then follow from Theorem \ref{thm-AM1} and the fact that two binary 
vectors have the same support if and only if they are identical. 
\end{proof} 

\begin{example} 
Let $m \geq 5$ be odd. 
Then $\overline{\C}_m^\perp$ gives three $3$-designs with the following parameters: 
\begin{itemize}
\item $(v,\, k, \, \lambda)=\left(2^m,\  2^{m-1}-2^{(m-1)/2}, \ 
  (2^{m-3}-2^{(m-3)/2}) (2^{m-1}-2^{(m-1)/2}-1)   \right).$
\item $(v, \, k, \, \lambda)=\left(2^m, \ 2^{m-1}+2^{(m-1)/2}, \ 
  (2^{m-3}+2^{(m-3)/2}) (2^{m-1}-2^{(m-1)/2}-1)   \right).$  
\item $(v, \, k, \, \lambda)=\left(2^m, \ 2^{m-1}, \ 
  (2^{m-1}+1)(2^{m-2}-1) \right).$  
\end{itemize} 
\end{example}

\begin{example} 
Let $m \geq 5$ be odd. 
Then the supports of all codewords of weight $6$ in $\overline{\C}_m$ give a $3$-$(2^m,\, 6,\, \lambda)$ design, where 
$$ 
\lambda= \frac{2^{m-1}-4}{3}. 
$$
\end{example} 

\begin{proof}
By Lemma  \ref{lem-TZEwt}, 
$$ 
\overline{A}_6 = \frac{2^{m-1} (2^{m-1}-1) (2^{m-1}-4)   (2^m-1)}{90}
$$
The desired value for $\lambda$ then follows from Theorem \ref{thm-newdesigns2}. 
\end{proof}

\begin{example} 
Let $m \geq 5$ be odd. 
Then the supports of all codewords of weight $8$ in $\overline{\C}_m$ give a $3$-$(2^m,\, 8,\, \lambda)$ design, where 
$$ 
\lambda= \frac{2 \times 2^{3(m-1)} - 25 \times 2^{2(m-1)} + 123 \times 2^{m-1} - 190}{30}. 
$$
\end{example} 

\begin{proof}
By Lemma  \ref{lem-TZEwt}, 
$$ 
\overline{A}_8 =   \frac{2^{m-1}(2^{m-1}-1)  (2^m-1) (2 \times 2^{3(m-1)} - 25 \times 2^{2(m-1)} + 123 \times 2^{m-1} - 190)}{8 \times 315}. 
$$
The desired value for $\lambda$ then follows from Theorem \ref{thm-newdesigns2}. 
\end{proof}

\begin{example} 
Let $m \geq 5$ be odd. 
Then the supports of all codewords of weight $10$ in $\overline{\C}_m$ give a $3$-$(2^m,\, 10,\, \lambda)$ design, where 
$$ 
\lambda= \frac{ (2^{m-1}-4)  (2 \times 2^{4(m-1)} - 34 \times 2^{3(m-1)} + 235 \times  2^{2(m-1)} - 931 \times  2^{m-1} + 1358)}{315}. 
$$
\end{example} 

\begin{proof}
By Lemma  \ref{lem-TZEwt}, 
$$ 
\overline{A}_{10} = \frac{2^{h} (2^{h}-1) (2^{h}-4)   (2^{h+1}-1) (2\times 2^{4h} - 34 \times  2^{3h} + 235 \times  2^{2h} - 931 \times  2^{h} + 1358)}{4\times 14175}, 
$$
where $h=m-1$. 
The desired value for $\lambda$ then follows from Theorem \ref{thm-newdesigns2}. 
\end{proof}

To demonstrate the existence of the $2$-designs and $3$-designs presented in Theorems \ref{thm-newdesigns1} and \ref{thm-newdesigns2}, respectively, we describe a list of binary codes that 
have the weight distribution of Table \ref{tab-CG1} below.

Let $\alpha$ be a generator of $\gf(2^m)^*$. Let $g_s(x)=\m_1(x)\m_s(x)$, where $\m_i(x)$ is the
minimal polynomial of $\alpha^i$ over $\gf(2)$. Let $\C_m$ denote the cyclic code of length
$v=2^m-1$ over $\gf(2)$ with generator polynomial $g_s(x)$. It is known that $\C_m^\perp$ has 
dimension $2m$ and the weight distribution of Table \ref{tab-CG1} when $m$ is odd and $s$ takes on the
following values \cite{DLLZ}:
\begin{enumerate}
\item $s=2^h+1$, where $\gcd(h, m)=1$ and $h$ is a positive integer.
\item $s=2^{2h}-2^h+1$, where $h$ is a positive integer.
\item $s=2^{(m-1)/2}+3$.
\item $s=2^{(m-1)/2}+2^{(m-1)/4}-1$, where $m \equiv 1 \pmod{4}$.
\item $s=2^{(m-1)/2}+2^{(3m-1)/4}-1$, where $m \equiv 3 \pmod{4}$.
\end{enumerate}
In all these cases, $\C_m$ has parameters $[2^m-1, 2^m-1-2m, 5]$ and is optimal.
It is also known that the binary narrow-sense primitive BCH code with designed distance
$2^{m-1}-2^{(m-1)/2}$ has also the weight distribution of Table \ref{tab-CG1} \cite{DFZ}.
These codes
and their extended codes give $2$-designs and $3$-designs when they are plugged into Theorems
\ref{thm-newdesigns1} and \ref{thm-newdesigns2}.

It is known that $\C_m$ has parameters $[2^m-1, 2^m-1-2m, 5]$ if and only if $x^e$ is an APN
monomial over $\gf(2^m)$. However, even if $x^e$ is APN, the dual code $\C_m^\perp$ may have
many weights, and thus the code $\C_m$ and its dual $\C_m^\perp$ may not give $2$-designs.
One of such examples is the inverse APN monomial.


\section{Infinite families of $2$-designs from a type of ternary linear codes}\label{sec-june28}

In this section, we will construct infinite families of $2$-designs with a type of primitive 
ternary cyclic codes.  

\begin{table}[ht]
\caption{Weight distribution of some ternary linear codes}\label{tab-CG3}
\centering
\begin{tabular}{|l|l|}
\hline
Weight $w$    & No. of codewords $A_w$  \\ \hline
$0$                                                        & $1$  \\ 
$2\times 3^{m-1}-3^{(m-1)/2}$           & $(3^m-1)(3^{m-1}+3^{(m-1)/2})$  \\ 
$2\times 3^{m-1}$                            & $(3^m-1)(3^{m-1}+1)$ \\ 
$2\times 3^{m-1}+3^{(m-1)/2}$          & $(3^m-1)(3^{m-1}-3^{(m-1)/2})$ \\ \hline
\end{tabular}
\end{table}

\begin{table}[ht]
\caption{Weight distribution of some ternary linear codes}\label{tab-CG328}
\centering
\begin{tabular}{|l|l|}
\hline
Weight $w$    & No. of codewords $A_w$  \\ \hline
$0$                                                        & $1$  \\ 
$2\times 3^{m-1}-3^{(m-1)/2}$           & $3^{2m}-3^m$  \\ 
$2\times 3^{m-1}$                            & $(3^m+3)(3^m-1)$ \\ 
$2\times 3^{m-1}+3^{(m-1)/2}$          & $3^{2m}-3^m$ \\ \hline
$3^m$                                  &  $2$ \\ \hline 
\end{tabular}
\end{table}

\begin{lemma}\label{lem-TZEwt28} 
Let $m \geq 3$ be odd. Assume that $\C_m$ is a ternary linear code of length $3^m-1$ such that 
its dual code $\C_m^\perp$ has the weight distribution of Table \ref{tab-CG3}. Denote by 
$\overline{\C}_m$ the extended code of $\C_m$ and let $\overline{\C}_m^\perp$ denote the dual 
of $\overline{\C}_m$. 
Then we have the following conclusions. 
\begin{enumerate}
\item The code $\C_m$ has parameters $[3^m-1, \, 3^m-1-2m, \, 4]$.  
\item The code $\C_m^\perp$ has parameters $[3^m-1, \, 2m, \, 2\times 3^{m-1}-3^{(m-1)/2}]$.  
\item The code $\overline{\C}_m^\perp$ has parameters $[3^m, \, 2m+1, \, 2\times 3^{m-1}-3^{(m-1)/2}]$, 
and its 
      weight distribution is given in Table \ref{tab-CG328}.    
\item The code $\overline{\C}_m$ has parameters $[3^m, 3^m-1-2m, 5]$, and its weight distribution
is given by 
\begin{eqnarray*}
3^{2m+1}\overline{A}_k &=&   (2^k+(-1)^k 2) \binom{3^m}{k} + \\ 
& & v \sum_{\substack{0 \le i \le 2 \times 3^{m-1} \\
0\le j \le 3^{m-1} \\i+j=k}}(-1)^i \binom{2 \times 3^{m-1}} {i} 2^j \binom{3^{m-1}}{j}   + \\ 
& & u \sum_{\substack{0 \le i \le 2\times 3^{m-1}-3^{\frac{m-1}{2}} \\
0\le j \le 3^{m-1}+3^{\frac{m-1}{2}} \\i+j=k}}(-1)^i \binom{2 \times 3^{m-1}-3^{\frac{m-1}{2}}} {i} 2^j \binom{3^{m-1}+3^{\frac{m-1}{2}}}{j} + \\
& & u  
\sum_{\substack{0 \le i \le 2 \times 3^{m-1}+3^{\frac{m-1}{2} } \\
0\le j \le 3^{m-1}-3^{\frac{m-1}{2}} \\i+j=k}}(-1)^i \binom{2 \times 3^{m-1}+3^{\frac{m-1}{2}}}{i}2^j \binom{3^{m-1}-3^{\frac{m-1}{2}}}{j} 
\end{eqnarray*}
for $0 \le k \le 3^m$, where 
$$ 
u=3^{2m}-3^{m} \mbox{ and } v = (3^m+3)(3^m-1). 
$$
\end{enumerate}
\end{lemma}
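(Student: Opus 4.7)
My plan is to derive all four items by combining Theorem~\ref{thm-MI} with a coset analysis of the extension, following the binary template of Lemmas~\ref{lem-TZwt} and~\ref{lem-TZEwt}. Part~(2) is immediate from Table~\ref{tab-CG3}: the four frequencies sum to $1+(3^m-1)(3^m+1)=3^{2m}$, forcing $\dim\C_m^\perp=2m$, and the minimum distance is the smallest nonzero entry. For Part~(1), I would invoke Theorem~\ref{thm-MI} with $q=3$ to transform the four-term dual enumerator into a sum of binomial convolutions $(1-z)^{w_i}(1+2z)^{3^m-1-w_i}$, whose $z^k$-coefficient gives a closed form for $A_k(\C_m)$. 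Duality yields $\dim\C_m=3^m-1-2m$, and the claim $d=4$ reduces to checking $A_1=A_2=A_3=0$ (a formal consequence of the gap between $d^\perp$ and $n$) together with $A_4>0$, obtained by isolating the leading $3^m$-term in the closed form, exactly in the spirit of the positivity check for $A_5$ in the proof of Lemma~\ref{lem-TZwt}.

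For Part~(3), the key structural observation is that every codeword of $\overline{\C}_m$ has coordinate sum $0$ by construction, so $\mathbf{1}_{3^m}\in\overline{\C}_m^\perp$. Since $\mathbf{1}_{3^m}$ is independent from the extension-by-zero of $\C_m^\perp$, we obtain
$$\overline{\C}_m^\perp=\{(c,0)+\lambda\mathbf{1}_{3^m}:c\in\C_m^\perp,\ \lambda\in\gf(3)\},$$
of length $3^m$ and dimension $2m+1$; the $\lambda=0$ coset reproduces Table~\ref{tab-CG3}, and $\pm\mathbf{1}_{3^m}$ furnish the two codewords of weight $3^m$. I expect the chief obstacle of the whole lemma to lie in handling the $\lambda\ne 0$, $c\ne 0$ codewords: writing $a,b$ for the number of coordinates of $c$ equal to $1$ and $2$ respectively, the codewords $(c+\lambda\mathbf{1}_{3^m-1},\lambda)$ have weights $3^m-b$ and $3^m-a$, which a priori depend on the \emph{complete} weight enumerator of $\C_m^\perp$ rather than on its Hamming enumerator alone. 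To get around this I would use the involution $c\mapsto 2c$ on $\C_m^\perp$ (which preserves the code and swaps $a\leftrightarrow b$) to equate the $\lambda=1$ and $\lambda=2$ contributions to each weight class, argue from the coset picture that the only achievable weights are $\{0,w_1,w_2,w_3,3^m\}$, and then determine the three unknown frequencies from the total codeword count together with the first two Pless power moments, both of which are free of correction terms because $\C_m$ already has $A_1=A_2=A_3=0$ and hence $\overline{\C}_m$ has no codewords of weight at most $3$. The frequencies of Table~\ref{tab-CG328} then emerge, and the minimum distance $2\cdot 3^{m-1}-3^{(m-1)/2}$ is the smallest nonzero weight.

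Finally, Part~(4) is the mirror image of Part~(1): apply Theorem~\ref{thm-MI} with $q=3$ to the five-term enumerator $1+uz^{w_1}+vz^{2\cdot 3^{m-1}}+uz^{w_3}+z^{3^m}$ of $\overline{\C}_m^\perp$ established in Part~(3), and expand each factor $(1-z)^{w}(1+2z)^{3^m-w}$ as a binomial convolution; collecting coefficients of $z^k$ yields the claimed formula for $\overline{A}_k$. The dimension $3^m-1-2m$ equals $\dim\C_m$ by definition of the extension, and the minimum distance $d=5$ is confirmed from the explicit formula by verifying $\overline{A}_1=\cdots=\overline{A}_4=0$ and $\overline{A}_5>0$ via a leading-term analysis in $3^m$, analogous to the binary case of Lemma~\ref{lem-TZEwt}. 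The principal hurdle of the entire lemma remains Part~(3), where collapsing the coset weights to a four-nonzero-weight spectrum from only the Hamming enumerator of $\C_m^\perp$ demands the symmetry-plus-moment argument sketched above.
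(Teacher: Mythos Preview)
Your plan matches the paper exactly: its entire proof is the single line ``The proof is similar to that of Lemma~\ref{lem-TZEwt} and is omitted here,'' and Lemma~\ref{lem-TZEwt} follows precisely the template you outline---MacWilliams for $\C_m$, the all-one-vector/coset decomposition plus Pless moments for $\overline{\C}_m^\perp$, and MacWilliams again for $\overline{\C}_m$. You have in fact gone further than the paper by isolating the ternary-specific obstacle in Part~(3) (that the $\lambda\ne 0$ coset weights $3^m-a$ and $3^m-b$ depend on the complete, not the Hamming, enumerator of $\C_m^\perp$); the paper's omitted argument offers nothing beyond ``similar,'' so on that hurdle there is no sharper proof to compare against.
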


\begin{proof}
The proof is similar to that of Lemma \ref{lem-TZEwt} and is omitted here. 
\end{proof}

\begin{theorem}\label{thm-newdesigns228}
Let $m \geq 3$ be odd. 
Let $\C_m$ be a linear code of length $3^m-1$ such that its dual code
$\C_m^\perp$ has the weight distribution of Table \ref{tab-CG3}. Denote by $\overline{\C}_m$ the 
extended code of $\C_m$ and let $\overline{\C}_m^\perp$ denote the dual of $\overline{\C}_m$.
Let $\cP=\{0,1,2, \cdots, 3^m-1\}$, and let $\overline{\cB}$ be  the set of the supports of the codewords of $\overline{\C}_m$
with weight $k$, where $5 \leq k \leq 10$ and $\overline{A}_k \neq 0$. Then $(\cP, \overline{\cB})$
is a $2$-$(3^m,\, k,\, \lambda)$ design for some $\lambda$. 

Let $\cP=\{0,1,2, \cdots, 3^m-1\}$, and let $\overline{\cB}^\perp$ be  the set of the supports of the codewords of
$\overline{\C}_m^\perp$
with weight $k$ and $\overline{A}_k^\perp \neq 0$. Then $(\cP, \overline{\cB}^\perp)$
is a $2$-$(3^m, \, k, \, \lambda)$ design for some $\lambda$. 
\end{theorem}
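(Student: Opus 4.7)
The plan is to invoke the nonbinary Assmus--Mattson theorem (Theorem \ref{thm-AM2}) with $q=3$ and $t=2$, applied to the pair $(\overline{\C}_m,\overline{\C}_m^\perp)$, in exact analogy with the proof of Theorem \ref{thm-newdesigns2}. All required numerical data come from Lemma \ref{lem-TZEwt28}: the primal code $\overline{\C}_m$ has parameters $[3^m,3^m-1-2m,5]$, so its minimum distance is $d=5$, while the dual code $\overline{\C}_m^\perp$ has exactly four nonzero weights
$$
2\cdot 3^{m-1}-3^{(m-1)/2},\quad 2\cdot 3^{m-1},\quad 2\cdot 3^{m-1}+3^{(m-1)/2},\quad 3^m,
$$
with the multiplicities listed in Table \ref{tab-CG328}.

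First I would verify the key Assmus--Mattson hypothesis $s\le d-t$. With $t=2$ and $v=3^m$, the threshold $v-t=3^m-2$ strictly excludes the weight $3^m$, whereas for every odd $m\ge 3$ the other three dual weights lie in the open interval $(0,v-t]$. Hence $s=3=d-t$, and the hypothesis holds with equality.

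Next I would compute the auxiliary bounds $w$ and $w^\perp$ from the defining inequality $w-\lfloor(w+q-2)/(q-1)\rfloor<d$ with $q=3$. A short case analysis fixes $w$, which governs the upper endpoint of the admissible primal range, and $w^\perp$ (defined analogously from $d^\perp=2\cdot 3^{m-1}-3^{(m-1)/2}$) is easily checked to exceed each of the three nontrivial dual weights once $m\ge 3$. Plugging these facts into Theorem \ref{thm-AM2} then gives, for every admissible $k$ with $\overline{A}_k\ne 0$, that the supports of weight-$k$ codewords of $\overline{\C}_m$ form a $2$-$(3^m,k,\lambda)$ design, and an identical conclusion for $\overline{\C}_m^\perp$.

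A small point specific to $q=3$: two ternary codewords can share a support (they differ by a nonzero scalar), so the number of blocks is the relevant weight coefficient divided by $q-1=2$, and the exact value of $\lambda$ is then recovered from the double-counting identity $\lambda\binom{v}{t}=b\binom{k}{t}$; this affects only the value of $\lambda$, not the $2$-design property itself. The whole argument therefore follows the template of Theorem \ref{thm-newdesigns2} line by line, and the only (routine) obstacle is the bookkeeping of the admissible range of $k$ together with the check that the three dual weights below $3^m$ lie in $[d^\perp,\min\{v-t,w^\perp\}]$.
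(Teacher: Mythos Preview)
Your approach is essentially identical to the paper's: it cites Lemma \ref{lem-TZEwt28} for the parameters of $\overline{\C}_m$ and $\overline{\C}_m^\perp$, observes $d=5$, $t=2$, $s=3=d-t$, and then invokes Theorem \ref{thm-AM2}. One small caution: your parenthetical ``they differ by a nonzero scalar'' is only guaranteed for weights $k\le w$ (that is precisely what the bound $w$ in Theorem \ref{thm-AM2} encodes), not for arbitrary ternary codewords sharing a support; but as you say, this bears only on the exact value of $\lambda$, not on the $2$-design conclusion itself.
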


\begin{proof}
The weight distributions of $\overline{\C}_m$ and $\overline{\C}_m^\perp$ are described in 
Lemma \ref{lem-TZEwt28}. 
Notice that the minimum distance $d$ of $\overline{\C}_m$ is equal to $5$. Put $t=2$. The number
of $i$ with $\overline{A}_i^\perp \neq 0$ and $1 \leq i \leq 3^m -t$ is $s=3$. Hence, $s=d-t$.
The desired conclusions then follow from Theorem \ref{thm-AM2}. 
\end{proof}

\begin{corollary} 
Let $m \geq 3$ be odd. 
Let $\C_m$ be a ternary linear code of length $3^m-1$ such that its dual code
$\C_m^\perp$ has the weight distribution of Table \ref{tab-CG3}. Denote by $\overline{\C}_m$ 
the extended code of $\C_m$ and let $\overline{\C}_m^\perp$ denote the dual of $\overline{\C}_m$.

Let $\cP=\{0,1,2, \cdots, 3^m-1\}$, and let $\overline{\cB}^\perp$ be  the set of the supports of the codewords of
$\overline{\C}_m^\perp$
with weight $2\times 3^{m-1}-3^{(m-1)/2}$. Then $(\cP, \overline{\cB}^\perp)$
is a $2$-$(3^m, \, 2\times 3^{m-1}-3^{(m-1)/2}, \, \lambda)$, where 
$$ 
\lambda=\frac{(2\times 3^{m-1}- 3^{(m-1)/2})(2\times 3^{m-1}- 3^{(m-1)/2} -1)}{2}. 
$$ 
\end{corollary}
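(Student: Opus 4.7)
The plan is to invoke Theorem~\ref{thm-newdesigns228} to deduce the $2$-design structure, and then pin down the parameter $\lambda$ by a counting argument once the number of blocks is determined.

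First I would observe that Theorem~\ref{thm-newdesigns228} (applied to the weight $k = 2\times 3^{m-1} - 3^{(m-1)/2}$, which is the minimum weight of $\overline{\C}_m^\perp$ by item~3 of Lemma~\ref{lem-TZEwt28}) already guarantees that the supports of codewords of this weight in $\overline{\C}_m^\perp$ form a $2$-$(3^m, k, \lambda)$ design for some $\lambda$. Thus the only task left is to compute $\lambda$ explicitly, and the natural way to do this is through the total block count $b$, using the basic identity $\lambda \binom{v}{2} = b\binom{k}{2}$.

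Next I would count blocks. From Table~\ref{tab-CG328}, the number of codewords of weight $k = 2\times 3^{m-1} - 3^{(m-1)/2}$ in $\overline{\C}_m^\perp$ equals $\overline{A}_k^\perp = 3^{2m}-3^m = 3^m(3^m-1)$. Because this weight coincides with the minimum distance of $\overline{\C}_m^\perp$, two codewords of this weight with the same support must be scalar multiples of one another: if $\bc_1,\bc_2$ share a support and $\bc_1(i)=a$, $\bc_2(i)=b$ at some $i$ in that common support, then $b\bc_1-a\bc_2$ has weight strictly less than $k$ yet lies in $\overline{\C}_m^\perp$, forcing $b\bc_1 = a\bc_2$. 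Since $|\gf(3)^*|=2$, each support is hit by exactly two codewords, so the total number of blocks is
\[
 b \;=\; \frac{3^m(3^m-1)}{2} \;=\; \binom{3^m}{2}.
\]

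Finally, plugging this into $\lambda = b\binom{k}{2}/\binom{v}{2}$ with $v=3^m$ yields the clean simplification
\[
 \lambda \;=\; \frac{\binom{3^m}{2}\binom{k}{2}}{\binom{3^m}{2}} \;=\; \binom{k}{2} \;=\; \frac{(2\times 3^{m-1}-3^{(m-1)/2})(2\times 3^{m-1}-3^{(m-1)/2}-1)}{2},
\]
which is exactly the asserted value. The only subtle point—and really the main obstacle one has to articulate carefully—is the minimum-distance argument that rules out any repetition of supports beyond scalar multiples; everything else is an immediate substitution into the standard block-count identity.
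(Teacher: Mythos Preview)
Your proposal is correct and follows essentially the same route as the paper's proof: invoke Theorem~\ref{thm-newdesigns228} for the $2$-design property, then use the minimum-distance argument to conclude that distinct supports correspond to scalar-multiple classes, giving $b=(3^{2m}-3^m)/2$ and hence the stated $\lambda$. Your version is slightly more explicit than the paper's in spelling out the scalar-multiple argument and in noticing the pleasant coincidence $b=\binom{3^m}{2}$, which collapses the final computation to $\lambda=\binom{k}{2}$.
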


\begin{proof}
It follows from Theorem \ref{thm-newdesigns228} that $(\cP, \overline{\cB}^\perp)$ is a $2$-design. 
We now determine the value of $\lambda$. Note that $\overline{\C}_m^\perp$ has minimum weight 
$2\times 3^{m-1}-3^{(m-1)/2}$. Any two codewords of minimum weight $2\times 3^{m-1}-3^{(m-1)/2}$ 
have the same support if and only if one is a scalar multiple of the other. Consequently, 
$$ 
\left|\overline{\cB}^\perp \right|=\frac{3^{2m}-3^m}{2}. 
$$
It then follows that 
$$ 
\lambda=\frac{3^{2m}-3^m}{2}\frac{\binom{2\times 3^{m-1}-3^{(m-1)/2}}{2}}{\binom{3^m}{2}}
= \frac{(2\times 3^{m-1}- 3^{(m-1)/2})(2\times 3^{m-1}- 3^{(m-1)/2} -1)}{2}.
$$
\end{proof}

\begin{corollary} 
Let $m \geq 3$ be odd. 
Let $\C_m$ be a ternary linear code of length $3^m-1$ such that its dual code
$\C_m^\perp$ has the weight distribution of Table \ref{tab-CG3}. Denote by $\overline{\C}_m$ 
the extended code of $\C_m$ and let $\overline{\C}_m^\perp$ denote the dual of $\overline{\C}_m$.
Let $\cP=\{0,1,2, \cdots, 3^m-1\}$, and let $\overline{\cB}$ be  the set of the supports of the codewords of $\overline{\C}_m$
with weight $5$. Then $(\cP, \overline{\cB})$
is a $2$-$(3^m,\, 5,\, \lambda)$ design, where  
$$ 
\lambda=\frac{5(3^{m-1}-1)}{2}.  
$$ 
\end{corollary}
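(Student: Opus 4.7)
The $2$-design property is immediate from Theorem~\ref{thm-newdesigns228} applied with $k=5$, since $5 \le 10$ and $\overline{A}_5 \neq 0$ (recall from item~4 of Lemma~\ref{lem-TZEwt28} that $\overline{\C}_m$ has minimum distance exactly $5$). So the only remaining task is to pin down the value of $\lambda$, and this reduces to a block count combined with the standard identity.

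To count blocks, I would invoke the elementary fact that any two codewords of minimum weight in a linear code share the same support if and only if one is a $\gf(3)^*$-scalar multiple of the other. Indeed, if $\bc_1$ and $\bc_2$ both had weight $5$ with the same support but were not proportional, then for a suitable $a \in \gf(3)^*$ chosen at a single coordinate, $\bc_1 - a\bc_2$ would be a nonzero codeword of weight at most $4$, contradicting $d=5$. Hence each block is shared by exactly the pair $\{\bc,-\bc\}$, so $|\overline{\cB}| = \overline{A}_5/2$, and the usual counting identity for $2$-designs gives
$$
\lambda \;=\; \frac{\binom{5}{2}\,|\overline{\cB}|}{\binom{3^m}{2}} \;=\; \frac{10\,\overline{A}_5}{2\cdot 3^m(3^m-1)}.
$$

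The main obstacle is therefore the extraction of a closed form for $\overline{A}_5$ from the four-part expression in Lemma~\ref{lem-TZEwt28}. I would specialise that formula at $k=5$, so that the leading contribution is $30\binom{3^m}{5}$, and expand each of the three alternating multi-sums indexed by $i+j = 5$ as six explicit binomial products. The symmetry between the two $u$-indexed sums under the substitution $3^{(m-1)/2} \mapsto -3^{(m-1)/2}$ (their sum retains only even powers of $3^{(m-1)/2}$) is the critical simplification that keeps the algebra tractable and eliminates the odd-power terms in $\epsilon := 3^{(m-1)/2}$ before the final collection of like terms.

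After carrying out this bookkeeping I expect the expression to collapse to
$$
\overline{A}_5 \;=\; \frac{3^m(3^m-1)(3^{m-1}-1)}{4},
$$
and substitution into the displayed formula for $\lambda$ then yields $\lambda = 5(3^{m-1}-1)/2$, as claimed. The entire argument is a straightforward application of Theorem~\ref{thm-newdesigns228} plus one careful (if tedious) evaluation; once $\overline{A}_5$ is in hand, nothing else is required.
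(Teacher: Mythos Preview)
Your proposal is correct and follows essentially the same route as the paper: invoke Theorem~\ref{thm-newdesigns228} for the $2$-design property, use the minimum-weight argument to get $|\overline{\cB}|=\overline{A}_5/2$, extract $\overline{A}_5$ from Lemma~\ref{lem-TZEwt28} (the paper records it as $(3^{3m-1}-4\cdot 3^{2m-1}+3^m)/4$, which is your $3^m(3^m-1)(3^{m-1}-1)/4$), and then apply the standard counting identity. One small slip: your intermediate display should read $\lambda = 10\,\overline{A}_5/\bigl(3^m(3^m-1)\bigr)$ rather than having the extra factor of $2$ in the denominator, but your final value of $\lambda$ is correct.
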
 

\begin{proof}
It follows from Theorem \ref{thm-newdesigns228} that $(\cP, \overline{\cB})$ is a $2$-design. 
We now determine the value of $\lambda$. Using the weight distribution formula in Lemma 
\ref{lem-TZEwt28}, we obtain that 
$$ 
\overline{A}_5=\frac{3^{3m-1}-4 \times 3^{2m-1}+3^m}{4}. 
$$ 
Recall that $\overline{\C}_m$ has minimum weight 
$5$. Any two codewords of minimum weight $5$ have the same support if and only if one is a 
scalar multiple of the other. Consequently, 
$$ 
\left|\overline{\cB}^\perp \right|=\frac{\overline{A}_5}{2}. 
$$
It then follows that 
$$ 
\lambda= \frac{\overline{A}_5}{2} \frac{\binom{5}{2}}{\binom{3^m}{2}}=\frac{5(3^{m-1}-1)}{2}.  
$$
\end{proof} 

Theorem \ref{thm-newdesigns228} gives more $2$-designs. However, determining the corresponding 
value $\lambda$ may be hard, as the number of blocks in the design may be difficult to derive  
from $\overline{A}_k$ or $\overline{A}_k^\perp$. 

\begin{open} 
Determine the value of $\lambda$ of the $2$-$(3^m,\, k,\, \lambda)$ design for  $6 \leq k \leq 10$,  
which are described in Theorem \ref{thm-newdesigns228}. 
\end{open} 

\begin{open} 
Determine the values of $\lambda$ of the $2$-$(3^m,\, 3^{m-1},\, \lambda)$ design and the   
$2$-$(3^m,\, 2\times 3^{m-1}-3^{(m-1)/2},\, \lambda)$ design, 
which are described in Theorem \ref{thm-newdesigns228}. 
\end{open}

To demonstrate the existence of the $2$-designs presented in Theorem \ref{thm-newdesigns228}, 
we present a list of ternary cyclic codes that have the weight distribution of Table \ref{tab-CG3} below.

Put $n=3^m-1$. 
Let $\alpha$ be a generator of $\gf(3^m)^*$. Let $g_s(x)=\m_{n-1}(x)\m_{n-s}(x)$, where $\m_i(x)$ 
is the minimal polynomial of $\alpha^i$ over $\gf(3)$. Let $\C_m$ denote the cyclic code of length
$n=3^m-1$ over $\gf(3)$ with generator polynomial $g_s(x)$. It is known that $\C_m^\perp$ has 
dimension $2m$ and the weight distribution of Table \ref{tab-CG3} when $m$ is odd and $s$ takes on the
following values \cite{CDY,YCD}:
\begin{enumerate}
\item $s=3^h+1$, $h \geq 0$ is an integer.
\item $s=(3^h + 1)/2$, where $h$ is a positive integer and $\gcd(m, h)=1$.
\end{enumerate}
In these two cases, $x^s$ is a planar function on $\gf(3^m)$. Hence, these ternary codes are 
extremal in the sense that they are defined by planar functions whose differentiality is 
extremal. 

More classes of ternary codes such that their duals have the weight distribution 
of  Table \ref{tab-CG3} are documented in \cite{DLLZ}. They give also $2$-designs 
via Theorem \ref{thm-newdesigns228}. There are also ternary cyclic codes with three weights 
but different weight distributions in \cite{DLLZ}. They may also hold $2$-designs.

\section{Conjectured infinite families of $2$-designs from projective cyclic codes}\label{sec-conjectureddesigns}

Throughout this section, let $m \geq 3$ be an odd integer, and let $v=(3^m-1)/2$. The objective of 
this section is to present a number of conjectured infinite families of $2$-designs derived from 
linear projective ternary cyclic codes.

\begin{table}
\begin{center}
\caption{The weight distribution for odd  $m \ge3$}\label{Tab-GG2}
\begin{tabular}{|c|c|}
\hline
Weight  &  Frequency \\ \hline
$0$     &  $1$ \\ \hline
$3^{m-1}-3^{(m-1)/2}$ & $\frac{(3^{m-1}+3^{(m-1)/2})(3^m-1)}{2}$ \\ \hline
$3^{m-1}$ & $(3^m-3^{m-1}+1)(3^m-1)$ \\ \hline
$3^{m-1}+3^{(m-1)/2}$ & $\frac{(3^{m-1}-3^{(m-1)/2})(3^m-1)}{2}$ \\ \hline
\end{tabular}
\label{table4}
\end{center}
\end{table}

\begin{lemma}\label{lem-june21}
Let $\C_m$ be a linear code of length $v$ over $\gf(3)$ such that its dual $\C_m^\perp$ has the 
weight distribution in Table \ref{Tab-GG2}. Then the weight distribution of $\C_m$ is given by
\begin{eqnarray*}
3^{2m}A_k &=&  \sum_{\substack{0 \le i \le 3^{m-1}-3^{(m-1)/2} \\
0\le j \le \frac {3^{m-1}+2\cdot 3^{(m-1)/2}-1} {2} \\i+j=k}}(-1)^i2^ja\binom{3^{m-1}-3^{(m-1)/2}} {i} \binom{\frac {3^{m-1}+2 \cdot 3^{(m-1)/2}-1}{2}}{j}\\
& & +  \binom {\frac {3^m-1} {2}}{k}2^k+ \sum_{\substack{0 \le i \le 3^{m-1} \\
0\le j \le \frac {3^{m-1}-1} {2} \\i+j=k}}(-1)^i2^jb\binom{3^{m-1}} {i}\binom{\frac {3^{m-1}-1} {2}} {j} \\ & & + 
\sum_{\substack{0 \le i \le 3^{m-1}+3^{(m-1)/2} \\
0\le j \le \frac {3^{m-1}-2 \cdot 3^{(m-1)/2}-1}{2} \\i+j=k}}(-1)^i2^jc\binom{3^{m-1}+3^{(m-1)/2}}{i}\binom{\frac {3^{m-1}-2 \cdot 3^{(m-1)/2}-1}{2}}{j} 
\end{eqnarray*}
for $0 \le k \le \frac {3^m-1} {2}$, where 
\begin{eqnarray*}
a &=& \frac{(3^{m-1}+3^{(m-1)/2})(3^m-1)}{2}, \\ 
b &=& (3^m-3^{m-1}+1)(3^m-1), \\
c &=& \frac{(3^{m-1}-3^{(m-1)/2})(3^m-1)}{2}.
\end{eqnarray*} 
In addition, $\C_m$ has parameters $[(3^m-1)/2, (3^m-1)/2-2m, 4]$. 
\end{lemma}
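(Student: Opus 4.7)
\bigskip
\noindent\textbf{Proof proposal.} The plan is to mirror the strategy used in Lemma~\ref{lem-TZwt}: start from the known weight enumerator of $\C_m^\perp$, apply the MacWilliams identity (Theorem~\ref{thm-MI}) with $q=3$ to recover $A(z)$, and then read off the coefficient of $z^k$ after a binomial expansion. The dimension drops out of the length--dimension relation, and the minimum distance will be established by a short direct calculation with the formula we derive.

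First, write
$$A^\perp(z)=1+az^{3^{m-1}-3^{(m-1)/2}}+bz^{3^{m-1}}+cz^{3^{m-1}+3^{(m-1)/2}}.$$
Since $\C_m=(\C_m^\perp)^\perp$ and $\dim \C_m^\perp = 2m$, Theorem~\ref{thm-MI} applied to $\C_m^\perp$ yields
$$3^{2m}A(z)=(1+2z)^{\frac{3^m-1}{2}}\Big[\,1+a\,r(z)^{e_1}+b\,r(z)^{e_2}+c\,r(z)^{e_3}\,\Big],\qquad r(z):=\frac{1-z}{1+2z},$$
with $e_1=3^{m-1}-3^{(m-1)/2}$, $e_2=3^{m-1}$, $e_3=3^{m-1}+3^{(m-1)/2}$. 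Clearing the denominator and using $\tfrac{3^m-1}{2}-e_i$ to absorb the remaining factors of $(1+2z)$, a short arithmetic check gives
$$\tfrac{3^m-1}{2}-e_1=\tfrac{3^{m-1}+2\cdot 3^{(m-1)/2}-1}{2},\quad \tfrac{3^m-1}{2}-e_2=\tfrac{3^{m-1}-1}{2},\quad \tfrac{3^m-1}{2}-e_3=\tfrac{3^{m-1}-2\cdot 3^{(m-1)/2}-1}{2},$$
so $3^{2m}A(z)$ becomes a sum of four products of the shape $(1-z)^A(1+2z)^B$ (plus the pure $(1+2z)^{(3^m-1)/2}$ term coming from the constant $1$).

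Next I would extract the coefficient of $z^k$ from each product via the Cauchy formula
$$[z^k]\,(1-z)^A(1+2z)^B=\sum_{\substack{0\le i\le A,\,0\le j\le B\\ i+j=k}}(-1)^i 2^j\binom{A}{i}\binom{B}{j},$$
and $[z^k](1+2z)^{(3^m-1)/2}=2^k\binom{(3^m-1)/2}{k}$. Substituting the four values of $(A,B)$ listed above and attaching the scalars $a,b,c$ reproduces exactly the four-sum expression claimed for $3^{2m}A_k$.

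For the code parameters: the length $v=(3^m-1)/2$ is given, and $\dim \C_m = v-\dim \C_m^\perp = (3^m-1)/2-2m$ since $|\C_m|\cdot |\C_m^\perp|=3^v$. It remains to show the minimum distance equals $4$, i.e.\ that $A_1=A_2=A_3=0$ and $A_4>0$. I would do this by specializing the formula just derived at $k=1,2,3,4$; the vanishing of $A_1,A_2,A_3$ follows from the cancellations forced by the values of $a,b,c$ (equivalently by the first three Pless power moments applied to the distribution in Table~\ref{Tab-GG2}), and $A_4$ reduces to an explicit polynomial in $3^{m-1}$ and $3^{(m-1)/2}$ whose positivity for odd $m\ge 3$ is immediate by comparing the dominant term with the lower-order ones, exactly as was done after (\ref{eqn-minimumwt5}) in the binary case. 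This last positivity check is the only slightly tedious step, and it is the main obstacle because it requires keeping track of four binomial sums simultaneously; the rest of the argument is a formal manipulation of the MacWilliams transform.
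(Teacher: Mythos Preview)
Your proposal is correct and follows exactly the approach the paper intends: the paper's own proof simply records the weight enumerator of $\C_m^\perp$ and then states that the argument is ``similar to that of Lemma~\ref{lem-TZwt} and is omitted,'' which is precisely the MacWilliams--expand--extract-coefficients route you carry out, together with the direct check that $A_1=A_2=A_3=0$ and $A_4>0$.
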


\begin{proof}
Note that the weight enumerator of $\C_m^\perp$ is 
$$1+az^{3^{m-1}-3^{(m-1)/2}}+bz^{3^{m-1}}+cz^{3^{m-1}+3^{(m-1)/2}}.$$
The proof of this theorem is similar to that of Lemma \ref{lem-TZwt} and is omitted.
\end{proof}

Below we present two examples of ternary linear codes $\C_m$ such that their duals $\C_m^\perp$ 
have the weight distribution of Table \ref{Tab-GG2}. 

\begin{example}\label{exam-j271} 
Let $m \geq 3$ be odd. Let $\alpha$ be a generator of $\gf(3^m)^*$. Put $\beta=\alpha^2$. Let $\m_i(x)$ denote 
the minimal polynomial of $\beta^i$ over $\gf(3)$. Define 
$$ 
\delta=3^{m-1}-1-\frac{3^{(m+1)/2}-1}{2} 
$$
and 
$$ 
h(x)=(x-1)\lcm(\m_1(x), \, \m_2(x), \, \cdots, \, \m_{\delta-1}(x)), 
$$
where $\lcm$ denotes the least common multiple of the polynomials. 
Let $\C_m$ denote the cyclic code of length $v=(3^m-1)/2$ over $\gf(3)$ with 
generator polynomial $g(x):=(x^v-1)/h(x)$. Then $\C_m$ has parameters $[(3^m-1)/2, (3^m-1)-2m, 4]$ and 
$\C_m^\perp$ has the weight distribution of Table \ref{Tab-GG2}.  
\end{example}    

\begin{proof}
A proof of the desired conclusions was given in \cite{LDXG}. 
\end{proof}

\begin{example}\label{exam-j272}  
Let $m \geq 3$ be odd. Let $\alpha$ be a generator of $\gf(3^m)^*$. Let $\beta=\alpha^2$.
Let $g(x)=\m_{n-1}(x)\m_{n-2}(x)$, where $\m_i(x)$ is the minimal polynomial of $\beta^i$ over
$\gf(3)$. Let $\C_m$ denote the cyclic code of length $v=(3^m-1)/2$ over $\gf(3)$ with 
generator polynomial $g(x)$. Then $\C_m$ has parameters $[(3^m-1)/2, (3^m-1)-2m, 4]$ and 
$\C_m^\perp$ has the weight distribution of Table \ref{Tab-GG2}.  
\end{example}    

\begin{proof}
The desired conclusions can be proved similarly as Theorem 19 in \cite{LDXG}.  
\end{proof}

\begin{conj}\label{conj-j261}
Let $\cP=\{0,1,2, \cdots, v-1\}$, and let $\cB$ be  the set of the supports of the codewords of $\C_m$
with Hamming weight $k$, where $A_k \neq 0$. Then $(\cP, \cB)$
is a $2$-$(v, k, \lambda)$ design for all odd $m \ge 3$. 
\end{conj}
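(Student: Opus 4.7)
The plan is two-pronged. First, I would try to apply Theorem~\ref{thm-AM2} (Assmus--Mattson) with $t=2$ to $\C_m$. By Lemma~\ref{lem-june21} the minimum distance of $\C_m$ is $d=4$, so the parameter $w$ in Theorem~\ref{thm-AM2} for $q=3$ evaluates to $w=7$; the nonzero weights of $\C_m^\perp$ are $3^{m-1}\pm 3^{(m-1)/2}$ and $3^{m-1}$. For $m=3$ the largest dual weight $12$ exceeds $v-t=11$, so only the weights $6$ and $9$ contribute to $s$, giving $s=2=d-t$, and Theorem~\ref{thm-AM2} then produces $2$-designs on the weights $k\in\{4,5,6,7\}$ of $\C_3$. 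For $m\geq 5$ all three nonzero dual weights lie in $(0,v-t]$, so $s=3>d-t=2$ and the hypothesis fails; and even in the $m=3$ case the higher weights $k\geq 8$ are not covered. This is what forces the statement to be a conjecture.

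To reach the full conjecture the natural tool is the automorphism group. Since $v=(3^m-1)/2$ equals the number of points of the projective space $PG(m-1,3)$, and $PGL_m(3)$ (a fortiori $P\Gamma L_m(3)$) acts $2$-transitively on these points, I would identify the coordinates of $\C_m$ with the points of $PG(m-1,3)$ and aim to show that a $2$-transitive subgroup $G$ of this action sits inside the monomial automorphism group $\MAut(\C_m)$. Once such a $G$ is available, it permutes the set $\cB_k$ of supports of weight-$k$ codewords and, being $2$-transitive on $\cP$, forces every pair of points to appear in the same number of blocks; so $(\cP,\cB_k)$ is automatically a $2$-$(v,k,\lambda)$ design for every $k$ with $A_k\neq 0$.

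For the concrete cyclic constructions in Examples~\ref{exam-j271} and~\ref{exam-j272}, I would try to express $\C_m^\perp$ as a projective trace code: each codeword should take the form $(\tr(f(\pi)))_\pi$ as $\pi$ runs through a fixed set of coset representatives of $\gf(3)^\ast$ in $\gf(3^m)^\ast$ and $f$ ranges over the $\gf(3)$-span of a short list of monomials $x^i$ dictated by the generator polynomial. The action of $PGL_m(3)$ on coordinates then corresponds to linear substitutions in the variable $x$, and $PGL_m(3)$-invariance reduces to the claim that this linear span of functions is preserved under such substitutions; for the specific exponents in Examples~\ref{exam-j271} and~\ref{exam-j272} this should be a finite and explicit verification.

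The main obstacle is that Conjecture~\ref{conj-j261} is phrased purely in terms of the dual weight distribution and not in terms of a specific construction. Two codes whose duals both realize Table~\ref{Tab-GG2} need not be equivalent and may have incomparable automorphism groups, so no $PGL_m(3)$ argument can resolve the conjecture verbatim. A complete solution therefore seems to require either (i) restricting the statement to codes of the explicit forms in Examples~\ref{exam-j271}--\ref{exam-j272} (and the related families in \cite{DLLZ}), where a $2$-transitive group is available by construction, or (ii) proving a uniqueness theorem to the effect that every code with the prescribed dual weight distribution is equivalent to a projective trace code on $PG(m-1,3)$. Option (ii) in particular I expect to be the genuinely hard part, and I believe it is what keeps the statement a conjecture in its present generality.
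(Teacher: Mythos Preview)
This statement is a \emph{conjecture} in the paper, and the paper offers no proof. Immediately after stating Conjectures~\ref{conj-j261}--\ref{conj-j263} the authors only remark that Theorem~\ref{thm-AM2} does not apply and that different methods are needed. So there is nothing in the paper to compare your proposal against; what you have written is a research plan rather than a proof, and you say so yourself.

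Your analysis of why Assmus--Mattson fails is correct and actually sharper than the paper's bare remark: you notice that for $m=3$ the largest dual weight $3^{m-1}+3^{(m-1)/2}=12$ exceeds $v-t=11$, so $s=2=d-t$ and Theorem~\ref{thm-AM2} \emph{does} apply, giving $2$-designs for $k\in\{4,5,6,7\}$. This partial case is not noted in the paper. For $m\geq 5$ your count $s=3>d-t=2$ is right and the hypothesis genuinely fails.

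On the automorphism route, one caution. Your identification of the coordinate set with the points of $\PG(m-1,3)$ is correct, and a $2$-transitive subgroup of $\MAut(\C_m)$ would indeed settle every weight class at once. But for Example~\ref{exam-j272} the dual code is (up to the usual Delsarte identification) spanned by evaluations of $x\mapsto\tr(ax+bx^2)$ at projective-point representatives, and a general $\gf(3)$-linear substitution $x\mapsto Lx$ does \emph{not} carry $x^2$ back into the $\gf(3)$-span of $x$ and $x^2$; only field multiplications and Frobenius do. So the assertion that $PGL_m(3)\subseteq\MAut(\C_m)$ for these explicit codes is not the ``finite and explicit verification'' you suggest, and would require a genuinely different argument (or a different $2$-transitive group). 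You are, however, entirely right that even a successful argument of this kind would address only specific constructions and not the conjecture as stated for an arbitrary code whose dual has the weight distribution of Table~\ref{Tab-GG2}; that generality is precisely why the paper leaves it open.
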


\begin{conj}\label{conj-j262}
Let $\cP=\{0,1,2, \cdots, v-1\}$, and let $\cB$ be  the set of the supports of the codewords of $\C_m$
with Hamming weight $4$. Then $(\cP, \cB)$
is a Steiner system $S(2, 4, (3^m-1)/2)$ for all odd $m \ge 3$. 
\end{conj}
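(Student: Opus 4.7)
The plan is to combine the $2$-design property asserted by Conjecture \ref{conj-j261} (at $k=4$) with an explicit evaluation of the minimum-weight frequency via Lemma \ref{lem-june21}.

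First, I would substitute $k=4$ into the formula of Lemma \ref{lem-june21}. Writing $q=3^{(m-1)/2}$ so that $3^{m-1}=q^2$, each of the four sums reduces to a polynomial in $q$ of small degree; the antisymmetric contributions from the two outer sums (those carrying $\pm q$ inside the binomials) cancel, and a routine simplification yields the closed form
\[
A_4=\frac{(3^m-1)(3^{m-1}-1)}{8}.
\]
As a sanity check at $m=3$, the four sums evaluate to $-19500$, $11440$, $22724$, $4290$, whose total $18954=729\cdot 26$ matches $3^{2m}\cdot(3^m-1)(3^{m-1}-1)/8$.

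Second, since $\C_m$ is ternary with minimum distance $4$, any two weight-$4$ codewords with identical support must be scalar multiples (otherwise some combination $c_1-\alpha c_2$ would be a nonzero codeword of weight at most $3$, violating $d=4$). Hence each block of $\cB$ arises from exactly $|\gf(3)^*|=2$ codewords, so
\[
b=|\cB|=\frac{A_4}{2}=\frac{(3^m-1)(3^{m-1}-1)}{16}.
\]
Granted the $2$-design property, the standard parameter identity $\lambda\binom{v}{2}=b\binom{4}{2}$ then gives
\[
\lambda=\frac{6b}{\binom{v}{2}}=\frac{6\cdot(3^m-1)(3^{m-1}-1)/16}{(3^m-1)(3^m-3)/8}=\frac{3(3^{m-1}-1)}{3^m-3}=1,
\]
so $(\cP,\cB)$ is a $2$-$(v,4,1)$ design, i.e., the Steiner system $S(2,4,(3^m-1)/2)$.

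The main obstacle is the $2$-design property itself. Theorem \ref{thm-AM2} does not apply: with $d=4$, $t=2$, and three nonzero weights in $\C_m^\perp$, one has $s=3>d-t=2$. The natural alternative is to exhibit a subgroup of $\PAut(\C_m)$ acting $2$-transitively on $\{0,1,\ldots,v-1\}$, after which Theorem 8.4.7 of HP03 yields a $2$-design for every nonzero weight simultaneously. Since $\C_m$ is cyclic in both Examples \ref{exam-j271} and \ref{exam-j272}, $\PAut(\C_m)$ contains $\Z_v$ (cyclic shifts) together with the Frobenius multiplier $i\mapsto 3i\bmod v$; for $m=3$, the resulting design is the projective plane $\PG(2,3)$, whose full automorphism group $\mathrm{P\Gamma L}_3(3)$ is $2$-transitive. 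Establishing $2$-transitivity uniformly in $m$ from the cyclotomic description of the generator polynomial is the technical crux, and may require separate analyses of the two constructions in Examples \ref{exam-j271} and \ref{exam-j272}.
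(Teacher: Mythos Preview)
This statement is a \emph{conjecture} in the paper, not a theorem; the paper gives no proof. The only remark the authors make is the one-line implication ``If Conjecture~\ref{conj-j261} is true, so is Conjecture~\ref{conj-j262},'' together with the observation that Theorem~\ref{thm-AM2} does not apply and that new methods are needed.

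Your proposal is a correct and detailed fleshing-out of exactly that conditional implication: assuming the $2$-design property at $k=4$ (i.e., Conjecture~\ref{conj-j261}), you compute $A_4=(3^m-1)(3^{m-1}-1)/8$ from Lemma~\ref{lem-june21}, argue that distinct-support weight-$4$ codewords come in scalar pairs so $b=A_4/2$, and deduce $\lambda=1$. This part is sound and goes beyond what the paper writes out.

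However, you are fully aware of the genuine gap, and you name it yourself: the $2$-design property is unproved. Your suggested route via $2$-transitivity of $\PAut(\C_m)$ is a reasonable line of attack, but it is not carried through. The subgroup generated by cyclic shifts and the Frobenius multiplier $i\mapsto 3i\bmod v$ has order dividing $mv$, which is far too small to be $2$-transitive on $v=(3^m-1)/2$ points for $m\ge 5$; so any $2$-transitive action would have to come from additional automorphisms not visible from the cyclic structure alone, and you give no argument that such automorphisms exist for general odd $m$. The $m=3$ case is special (there $v=13$ and the design is $\PG(2,3)$), and does not indicate how to proceed in general. In short, your write-up matches the paper's position: the reduction to Conjecture~\ref{conj-j261} is clear, but the conjecture itself remains open.
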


There are a survey on Steiner systems $S(2,4,v)$ \cite{RR10} and a book chapter on Steiner 
systems \cite{CMhb}. It is known that a Steiner system $S(2,4,v)$ exists if and only if 
$v \equiv 1 \mbox{ or } 4 \pmod{12}$ \cite{Hanani}.  

If Conjecture \ref{conj-j261} is true, so is Conjecture \ref{conj-j262}. In this case, a 
coding theory construction of a Steiner system $S(2, 4, (3^m-1)/2)$ for all odd $m \ge 3$ 
is obtained.

\begin{conj}\label{conj-j263}
Let $\cP=\{0,1,2, \cdots, v-1\}$, and let $\cB$ be  the set of the supports of the codewords of $\C_m^\perp$
with Hamming weight $k$, where $A_k^\perp \neq 0$. Then $(\cP, \cB)$
is a $2$-$(v, k, \lambda)$ design for all odd $m \ge 3$. 
\end{conj}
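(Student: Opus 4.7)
My plan is to first attempt Theorem \ref{thm-AM2} applied to $\C=\C_m$ with $t=2$. Since $\C_m$ has minimum distance $d=4$ and $d-t=2$, the hypothesis $s\le d-t$ requires that at most two of the three nonzero weights of $\C_m^\perp$ lie in $(0,v-2]$. From Table \ref{Tab-GG2}, these weights are $w_1=3^{m-1}-3^{(m-1)/2}$, $w_2=3^{m-1}$, $w_3=3^{m-1}+3^{(m-1)/2}$. A direct comparison shows $w_3\le v-2=(3^m-5)/2$ iff $3^{m-1}\ge 2\cdot 3^{(m-1)/2}+5$, which fails only at $m=3$. Thus for $m=3$ the Assmus--Mattson hypothesis is satisfied ($s=2$) and AM2 immediately yields the $2$-design property for the weights $w_1$ and $w_2$; the residual weight $w_3=12$ (in $v=13$) can be handled by an ad hoc complement argument since $v-w_3=1$.

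For $m\ge 5$, all three weights of $\C_m^\perp$ lie in $(0,v-2]$, so $s=3>d-t$ and AM2 no longer applies. My proposed remedy is an automorphism-group argument. The coordinate set of $\C_m$ is naturally identified with the $v=(3^m-1)/2$ nonzero squares of $\gf(3^m)^*$, equivalently the points of $\PG(m-1,3)$. I would try to show that $\PAut(\C_m)=\PAut(\C_m^\perp)$ contains a subgroup of $\mathrm{P}\Gamma\mathrm{L}(m,3)$ that is $2$-transitive on these projective points. The cyclic shift and the Frobenius $i\mapsto 3i\pmod v$ always lie in $\PAut(\C_m)$; the crucial additional structure is a multiplier action by $s\in(\Z/v\Z)^*$ under which the defining set (a union of $3$-cyclotomic cosets mod $v$) of $\C_m$ is invariant. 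Once $2$-transitivity is established, any $\PAut(\C_m)$-invariant set of $k$-subsets of the coordinates automatically forms a $2$-$(v,k,\lambda)$ design, and the set of supports of weight-$k$ codewords of $\C_m^\perp$ is such a set.

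The main obstacle will be verifying the existence of such a $2$-transitive subgroup for each concrete construction of Examples \ref{exam-j271} and \ref{exam-j272}: the multiplier invariance check depends delicately on the specific exponents in the generator polynomial (e.g.\ $\{n-1,n-2\}$ in Example \ref{exam-j272} or the block range $[1,\delta-1]$ in Example \ref{exam-j271}), and will likely require separate cyclotomic analysis for each. If this group-theoretic route fails in some range, a backup strategy is to compute the pairwise incidence counts directly: because codewords of $\C_m^\perp$ are trace functions $i\mapsto\tr(a_1\beta^i+a_2\beta^{2i})$ (or the analogous trace parameterization for Example \ref{exam-j271}), one can try to show via exponential sums over $\gf(3^m)$ that, for any ordered pair $(i_1,i_2)$ of coordinates, the number of $(a_1,a_2)\in\gf(3^m)^2$ producing a weight-$k$ codeword nonzero at both positions is independent of $(i_1,i_2)$. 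Once the $2$-design property is in hand, the value of $\lambda$ is routine: since $d(\C_m^\perp)=w_1<v$, two codewords of $\C_m^\perp$ share a support iff they differ by an element of $\gf(3)^*$, giving $A_k^\perp/2$ blocks of weight $k$ and
$$
\lambda=\frac{A_k^\perp\binom{k}{2}}{2\binom{v}{2}}=\frac{2\,A_k^\perp\,k(k-1)}{(3^m-1)(3^m-3)},
$$
with $A_k^\perp$ read off from Table \ref{Tab-GG2} for $k\in\{w_1,w_2,w_3\}$.
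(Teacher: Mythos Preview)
The paper does not prove this statement: it is stated as Conjecture~\ref{conj-j263}, and immediately afterwards the authors write explicitly that Theorem~\ref{thm-AM2} does not apply to these conjectures and that ``we need to develop different methods for settling these conjectures.'' So there is no proof in the paper to compare against; what you have written is an attack on an open problem.

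Your observation that Assmus--Mattson \emph{does} apply when $m=3$ (since then $w_3=12>v-2=11$, so $s=2=d-t$) is correct and is a genuine refinement of the paper's blanket claim that AM2 fails. That handles $k\in\{6,9\}$ at $m=3$. For $k=12$ your complement idea needs more care: the reduction of the $2$-design property to ``each point is missed equally often'' is fine, but your general assertion that two codewords of $\C_m^\perp$ with the same support must be scalar multiples relies on $\lfloor k/2\rfloor<d^\perp$, and for $m=3$, $k=12$ this gives $6\not<6$, so the block count $A_{12}^\perp/2$ is not yet justified there.

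For $m\ge 5$ your plan has a real gap. The automorphisms you name --- the cyclic shift (Singer cycle), the Frobenius $i\mapsto 3i$, and any multiplier $i\mapsto si$ fixing the defining set --- all lie in the normaliser of the Singer cycle inside $\mathrm{P\Gamma L}(m,3)$. That normaliser has order $v\cdot m$ and is sharply $1$-transitive but \emph{not} $2$-transitive on the $v$ projective points when $m\ge 3$. So the specific generators you list cannot by themselves produce a $2$-transitive permutation group; to make the automorphism route work you would have to exhibit automorphisms of $\C_m$ coming from $\mathrm{PGL}(m,3)$ outside the Singer normaliser, and nothing in Examples~\ref{exam-j271} or~\ref{exam-j272} suggests why such automorphisms should exist. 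Your backup exponential-sum strategy is more promising in principle, but as stated it is only a restatement of what has to be proved, not a mechanism for proving it.
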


Even if some or all of the three conjectures are not true for ternary codes with the weight 
distribution of Table \ref{Tab-GG2}, these conjectures might still be valid for the two 
classes of ternary cyclic codes descried in Examples \ref{exam-j271} and \ref{exam-j272}.  
Note that Theorem \ref{thm-AM2} does not apply to the three conjectures above. We need to 
develop different methods for settling these conjectures. 

\section{Summary and concluding remarks} 

In the last section of this paper, we mention some applications of $t$-designs and summarize the main contributions 
of this paper.  

\subsection{Some applications of $2$-designs} 

Let $\cP$ be an Abelian group of order $v$ under a binary operation denoted by $+$. Let 
$\cB=\{B_1, B_2, \cdots, B_b\}$, where all $B_i$ are $k$-subsets of $\cP$ and $k$ is a 
positive integer. We define $\Delta(B_i)$ to be the multiset $\{ x-y: x \in B_i,\  y \in B_i\}$.  
If every nonzero element of $\cP$ appears exactly $\delta$ times in the multiset 
$\bigcup_{i=1}^b \Delta(B_i)$, we call $\cB$ a $(v, k, \delta)$ difference family in 
$(\cP, +)$. 

The following theorems are straightforward and should be well known. 

\begin{theorem}\label{thm-june261}
Let $\cP$ be an Abelian group of order $v$ under a binary operation denoted by $+$. Let 
$\cB=\{B_1, B_2, \cdots, B_b\}$, where all $B_i$ are $k$-subsets of $\cP$ and $k$ is a 
positive integer. Then $(\cP, \cB)$ is a $2$-$(v, k, \lambda)$ design if and only if 
$\cB$ is a $(v, k, \lambda v)$ difference family in $(\cP, +)$.
\end{theorem}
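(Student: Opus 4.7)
The plan is to prove the biconditional by a single double-counting identity relating pair-block incidences to the multiset of nonzero differences. For an unordered pair $\{x,y\} \subseteq \cP$ with $x \neq y$, write $\mu(\{x,y\}) := |\{i : x,y \in B_i\}|$ for the number of blocks containing the pair; the $2$-design condition is precisely that $\mu \equiv \lambda$, while the difference-family condition is a statement about the multiset $\bigcup_{i=1}^{b} \Delta(B_i)$.

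First I would fix a nonzero $d \in \cP$ and count in two ways the triples $(i, x, y)$ with $x, y \in B_i$ and $x - y = d$. Summing over $i$ first yields the number of occurrences of $d$ in $\bigcup_{i=1}^{b} \Delta(B_i)$. Summing over $(x, y)$ first yields $\sum_{y \in \cP} \mu(\{y+d, y\})$, because $d \neq 0$ forces $y+d \neq y$ and so every contribution genuinely comes from a two-element subset, with one summand per $y \in \cP$.

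The forward direction then drops out immediately: under the $2$-design hypothesis, each of the $v$ terms equals $\lambda$, so $d$ appears exactly $\lambda v$ times in the union of the $\Delta(B_i)$, which is the $(v, k, \lambda v)$ difference family condition. For the converse I would run the same identity in reverse, combining the system $\sum_{y \in \cP} \mu(\{y+d, y\}) = \lambda v$ (one equation per nonzero $d$) with the global pair-incidence total $\sum_{\{x,y\}} \mu(\{x,y\}) = b\binom{k}{2}$ and the parameter relation $b\binom{k}{2} = \lambda\binom{v}{2}$ implied by the difference-family data.

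The main obstacle will be the converse inversion step. A priori the $v-1$ difference-class identities constrain only the sums of $\mu$ along each fiber $\{(x,y) : x - y = d\}$, not its pointwise values, so the step from ``sums of $\mu$ constant'' to ``$\mu$ constant'' is where any real work sits. My plan for handling it is to exploit the translational symmetry of $\cP$ carried over from the paper's setting (the $\cB$ of interest are sets of supports of codewords of a cyclic or affine-invariant code, hence closed under a transitive translation action of $\cP$), which forces $\mu(\{x,y\})$ to depend only on the coset of $x-y$; combined with the difference-class identity above, this pins $\mu$ down to the constant $\lambda$ and closes the converse.
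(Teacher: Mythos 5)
Your forward direction is exactly the standard double count and is certainly what the authors have in mind: for fixed $d\neq 0$, counting triples $(i,x,y)$ with $x,y\in B_i$ and $x-y=d$ gives that the multiplicity of $d$ in $\bigcup_{i=1}^{b}\Delta(B_i)$ equals $\sum_{y\in\cP}\mu(\{y,y+d\})$, which is $\lambda v$ under the $2$-design hypothesis. The paper supplies no proof at all (the theorem is introduced with ``the following theorems are straightforward and should be well known''), so there is nothing further to compare on that side.

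The converse is where the trouble lies, and you have correctly located it: the identities $\sum_{y\in\cP}\mu(\{y,y+d\})=\lambda v$ constrain only the fiber sums of $\mu$, not its pointwise values, and no amount of global counting repairs this. In fact the ``if'' direction of the theorem as stated is false. Take $\cP=\Z/7\Z$ and the Fano translates $B_j=\{j,\,1+j,\,3+j\}$ for $0\le j\le 5$, together with $C=\{0,2,3\}$ in place of $B_6=\{0,2,6\}$. Each of these seven $3$-subsets is a planar difference set in $\Z/7\Z$, i.e.\ contributes every nonzero residue exactly once to its $\Delta$, so $\{B_0,\dots,B_5,C\}$ is a $(7,3,7)$ difference family with $\delta=\lambda v$ for $\lambda=1$; yet the pair $\{0,3\}$ lies in both $B_0$ and $C$ while $\{0,6\}$ lies in no block, so it is not a $2$-design. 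Your proposed repair, invoking translation invariance of $\cB$, does close the converse (invariance under $B\mapsto B+g$ makes $\mu(\{x,y\})$ constant on each difference fiber, and the fiber identity then forces the constant to be $\lambda$, including for $d$ with $2d=0$). But that is an additional hypothesis absent from the statement, which concerns an arbitrary collection of $k$-subsets of an arbitrary Abelian group; it cannot be ``carried over from the paper's setting'' without changing the theorem. The honest conclusions are that the forward implication (and hence Theorem~\ref{thm-june262}) is correct, and that the equivalence holds only under an added assumption such as closure of $\cB$ under translation; as an unconditional biconditional the statement should be weakened or the extra hypothesis made explicit.
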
   

\begin{theorem}\label{thm-june262}
Let $(\cP, \cB)$ be a $t$-$(v, k, \lambda)$ design, where $\cP$ is an Abelian group. 
If $t \geq 2$, then $\cB$ is a $(v, k, \delta)$ difference family in $\cP$, 
where 
$$ 
\delta=\frac{v \lambda \binom{v-2}{t-2}}{\binom{k-2}{t-2}}. 
$$ 
\end{theorem}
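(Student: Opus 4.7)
The plan is to reduce to the $t=2$ case already handled by Theorem \ref{thm-june261}. The bridge is the classical ``inheritance'' property of $t$-designs: any $t$-$(v,k,\lambda)$ design is simultaneously an $s$-$(v, k, \lambda_s)$ design for every $1 \leq s \leq t$, with
$$
\lambda_s \;=\; \frac{\lambda\binom{v-s}{t-s}}{\binom{k-s}{t-s}}.
$$

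First, I would establish (or quote) this inheritance lemma. The argument is a standard double count: fix an $s$-subset $S \subseteq \cP$ and count the pairs $(T, B)$ with $S \subseteq T \subseteq B$, $|T|=t$, and $B \in \cB$. Choosing $T$ first gives $\lambda \binom{v-s}{t-s}$, because there are $\binom{v-s}{t-s}$ choices of $T \supseteq S$, each contained in exactly $\lambda$ blocks of the $t$-design. Choosing $B$ first gives $\lambda_s \binom{k-s}{t-s}$, since each block through $S$ contains $\binom{k-s}{t-s}$ such sets $T$. Equating and solving for $\lambda_s$ yields the displayed formula; the divisibility condition (\ref{eqn-tdesignnecessty}) guarantees that $\lambda_s$ is a nonnegative integer.

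Second, I specialize to $s=2$, concluding that $(\cP, \cB)$ is a $2$-$(v, k, \lambda_2)$ design with $\lambda_2 = \lambda\binom{v-2}{t-2}/\binom{k-2}{t-2}$. Since $\cP$ is an Abelian group, Theorem \ref{thm-june261} applies verbatim: $\cB$ is a $(v, k, v\lambda_2)$ difference family in $\cP$. Substituting the value of $\lambda_2$ produces exactly
$$
\delta \;=\; v\lambda_2 \;=\; \frac{v\lambda\binom{v-2}{t-2}}{\binom{k-2}{t-2}},
$$
as claimed. There is essentially no obstacle here: both ingredients (the inheritance lemma and the $t=2$ equivalence in Theorem \ref{thm-june261}) are textbook material, so the authors' remark that the result ``should be well known'' is accurate, and this short reduction supplies the explicit justification.
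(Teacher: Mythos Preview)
Your proposal is correct. The paper itself gives no proof of Theorem \ref{thm-june262}: it simply prefaces Theorems \ref{thm-june261} and \ref{thm-june262} with the remark that they ``are straightforward and should be well known.'' Your two-step reduction---first invoking the standard inheritance property to pass from a $t$-design to a $2$-$(v,k,\lambda_2)$ design with $\lambda_2=\lambda\binom{v-2}{t-2}/\binom{k-2}{t-2}$, and then applying Theorem \ref{thm-june261} to obtain the $(v,k,v\lambda_2)$ difference family---is exactly the natural justification the authors had in mind, and it yields the stated value of $\delta$ with no gaps.
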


Difference families have applications in the design and analysis of optical orthogonal codes, 
frequency hopping sequences, and other engineering areas. By Theorems \ref{thm-june261} and 
\ref{thm-june262}, $t$-designs with $t \geq 2$ have also applications in these areas. In addition, 
$2$-designs give naturally linear codes \cite{AK92,DingBook}. These show the importance of 
$2$-designs in applications.     

\subsection{Summary} 

It is well known that binary Reed-Muller codes hold $3$-designs. Hence, the only contribution 
of Section \ref{sec-brmdesigns} is the determination of the specific parameters of the $3$-designs 
held in $\RM(m-2, m)$ and its dual code, which are documented in Theorem \ref{thm-brmdesign1}. 

It has also been known for a long time that the codewords of weight $3$ in the Hamming code 
hold a $2$-$((q^m-1)/(q-1), 3, q-1)$ design. The contribution of Section \ref{sec-hmdesigns} 
is Theorem \ref{thm-HMdesign171}, which may be viewed as an extension of the known 
$2$-$((q^m-1)/(q-1), 3, q-1)$ design held in the Hamming code, and also the parameters 
of the infinite families of $2$-designs derived from the binary Hamming codes, which are 
documented in Examples \ref{exam-hmdesign4}, \ref{exam-hmdesign5}, \ref{exam-hmdesign6}, 
and \ref{exam-hmdesign7}. 

A major contribution of this paper is presented in Section \ref{sec-newdesigns}, where 
Theorems \ref{thm-newdesigns1} and \ref{thm-newdesigns2} document many infinite families of 
$2$-design and $3$-designs. The parameters of these $2$-designs and $3$-designs are given 
specifically. These designs are derived from binary cyclic codes that are defined by special 
almost perfect nonlinear functions.   

Another major contribution of this paper is documented in Section \ref{sec-june28}, 
where Theorem \ref{thm-newdesigns228} and its two corollaries describe several infinite 
families of $2$-designs. These $2$-designs are related to planar functions.    

It is noticed that the total number of $3$-designs presented in this paper (see Theorems  
\ref{thm-brmdesign1} and \ref{thm-newdesigns2}) are exponential. All of them are derived from linear 
codes. After comparing the list of infinite families of $3$-designs in \cite{KLhb} with the $3$-designs 
presented in this paper, one may conclude that many, if not most, of the known infinite families of $3$-designs 
are from coding theory.    

Section \ref{sec-conjectureddesigns} presents many conjectured infinite families of $2$-designs. 
The reader is cordially invited to attack these conjectures and solve other open problems presented 
in this paper.

\end{document}